\newtheorem{definition}{Definition}
\newtheorem{example}{Example}
\newtheorem{theorem}{Theorem}
\newtheorem{lemma}[theorem]{Lemma}
\newtheorem{corollary}[theorem]{Corollary}
\newcommand{\qed}{\phantom{a}\hfill $\Box$}
\newenvironment{proof}{\noindent\emph{Proof.}}{\qed}
\newcommand{\fif}{\rmiff}
\newcommand{\rmall}{\mathrel{\mbox{for all}~}}
\newcommand{\rmsome}{\mathrel{\mbox{for some}~}}
\newcommand{\rmwhere}{\mathbin{\mbox{where}~}}
\newcommand{\rmin}{\mathbin{~\mbox{in}~}}
\newcommand{\rmand}{\mathbin{~\mbox{and}~}}
\newcommand{\rmiff}{\mathbin{~\mbox{iff}~}}
\renewcommand{\mod}{\mathbin{\textsf{mod}}}
\newcommand{\Math}[1]{{${#1}$}}
\newcommand{\name}{\emph}
\newcommand{\class}{\textsc}
\newcommand{\auth}{\textsf}
\newcommand{\book}{\textit}
\newcommand{\jour}{\textsl}
\newcommand{\vol}{}
\newcommand{\nr}[1]{.#1}
\newcommand{\edt}[1]{(\auth{#1}, ed.)}
\newcommand{\eds}[1]{(\auth{#1}, eds.)}
\newcommand{\pub}[2]{ (#1, #2)}
\newcommand{\yr}[1]{, #1}
\newcommand{\pp}[2]{, #1--#2}
\newcommand{\true}{\mathit{true}}
\newcommand{\false}{\mathit{false}}
\newcommand{\limplies}{\supset}
\newcommand{\eqvt}{\Leftrightarrow}
\newcommand{\sumover}[1]{{\displaystyle \Sigma_{#1}}}
\newcommand{\sumbounds}[2]{{\displaystyle \Sigma_{#1}^{#2}}}
\newcommand{\orover}[1]{{\displaystyle \bigvee}_{#1}}
\newcommand{\alphbt}{\mbox{$A$}}
\newcommand{\oprt}{\sim}
\newcommand{\st}{.}
\newcommand{\intv}{\mbox{$\mathit{Intv}_w$}}
\newcommand{\pos}{\mbox{$\mathit{Pos}_w$}}
\newcommand{\lpos}{\mbox{$\ell\pos$}}
\newcommand{\cpos}{\mbox{$c\pos$}}
\newcommand{\btlth}{\mbox{$\mathit{BThTL}$}}
\newcommand{\utlinv}{\mbox{$\mathit{InvTL}$}}
\newcommand{\btlinv}{\mbox{$\mathit{BInvTL}$}}
\newcommand{\utlmodinv}{\mbox{$\mathit{InvModTL}$}}
\newcommand{\blintl}{\mbox{$\mathit{BLinTL}$}}
\newcommand{\utl}{\mbox{$\mathit{TL[\fut,\past]}$}}
\newcommand{\until}{\mathbin{\textsf{U}}}
\newcommand{\since}{\mathbin{\textsf{S}}}
\newcommand{\now}{\textsf{Now}~}
\newcommand{\fut}{\textsf{F}}
\newcommand{\henceforth}{\textsf{G}}
\newcommand{\past}{\textsf{P}}
\newcommand{\always}{\textsf{H}}
\newcommand{\nextt}{\textsf{X}}
\newcommand{\prev}{\textsf{Y}}
\newcommand{\ctl}{\mbox{$\mathit{CTL}$}}
\newcommand{\ltl}{\mbox{$\mathit{LTL}$}}
\newcommand{\ltlun}{\mbox{$\ltl[\fut,\past]$\/}}
\newcommand{\ltlunsuc}{\mbox{$\ltl[\fut,\past,\nextt,\prev]$\/}}
\newcommand{\first}[1]{\mbox{$F_{#1}$}}
\newcommand{\firstp}[1]{\mbox{$F^+_{#1}$}}
\newcommand{\last}[1]{\mbox{$L_{#1}$}}
\newcommand{\lastm}[1]{\mbox{$L^-_{#1}$}}
\newcommand{\firsta}{\first{a}}
\newcommand{\lasta}{\last{a}}
\newcommand{\uitl}{\mbox{$\mathit{UITL}$}}
\newcommand{\uitlpm}{\mbox{$\mathit{UITL^{\pm}}$}}
\newcommand{\fo}{\mbox{$\mathit{FO}$}}
\newcommand{\fotwo}{\mbox{$\fo^2$\/}}
\newcommand{\Pspace}{\class{Pspace}}
\newcommand{\Expspace}{\class{Expspace}}
\newcommand{\Exptime}{\class{Exptime}}
\newcommand{\twoexptime}{\class{2Exptime}}
\newcommand{\Np}{\class{Np}}
\newcommand{\oomit}[1]{}
\newcommand{\such}{.~}
\newcommand{\pti}{\aloo{~}}
\newcommand{\aloo}[1]{\lceil #1 \rceil}
\newcommand{\succr}{\oplus}
\newcommand{\predr}{\ominus}
\newcommand{\potdfa}{\mbox{$\mathit{po2dfa}$}}
\newcommand{\rpotdfa}{\mbox{$\mathit{RecPO2DFA}$}}
\newcommand{\rectl}{\mbox{$\mathit{TL}[X_\phi,Y_\phi]$}}
\newcommand{\tlrecr}{\mbox{$\mathit{TL}^+[X_\phi,Y_\phi]$}}
\newcommand{\at}{\mbox{$\mathit{At}$}}
\newcommand{\tlxy}{\mbox{$\mathit{TL[X_a,Y_a]}$}}
\newcommand{\weakx}[1]{\widetilde{X}_{#1}} 
\newcommand{\weaky}[1]{\widetilde{Y}_{#1}}
\newcommand{\xunit}{\nextt}
\newcommand{\yunit}{\prev}
\newcommand{\pfless}{\mathcal P^{<}}
\newcommand{\pfleq}{\mathcal P^{\leq}}
\newcommand{\pfgreat}{\mathcal P^{>}}
\newcommand{\pfgeq}{\mathcal P^{\geq}}
\newcommand{\gabar}[1]{G_{\overline{#1}}}
\newcommand{\habar}[1]{H_{\overline{#1}}}
\newcommand{\succp}{\overline{\oplus}}
\newcommand{\predm}{\overline{\ominus}}
\newcommand{\Nat}{{\mathbb{N}}}
\newcommand{\Int}{{\mathbb{Z}}}
\title{Deterministic Temporal Logics and Interval Constraints}
\author{Kamal Lodaya
\institute{The Institute of Mathematical Sciences, CIT Campus, 
Chennai \textit{600113}\thanks{The author is affiliated to 
Homi Bhabha National Institute, Anushaktinagar, Mumbai \textit{400094}.}}
\and Paritosh K.~Pandya
\institute{Tata Institute of Fundamental Research, Colaba, 
Mumbai \textit{400005}}
}
\begin{document}
\maketitle

Temporal logics have gained prominence in computer science 
as a property specification language for reactive systems. 
There is even an IEEE standard temporal logic supported by a consortium of 
Electronic Design Tool developers. 
Such systems maintain ongoing interaction between the environment and 
the system and their specification requires formulating constraints on 
the sequence of steps performed by the system. 
Unlike Classical logics which explicitly use variables to range over 
time points, temporal logics, which are rooted in tense logics,  
provide a variable-free approach which deals with time implicitly, using
modalities. 
The work on temporal logic for specifying and proving concurrent programs 
began with Pnueli's initial identification of this logic for reactive systems 
\cite{Pnu77}. Lamport also used
temporal logic to reason about properties of distributed systems \cite{Lam80}.

We work in the setting of finite and infinite words over a finite alphabet.  
A diverse set of modalities can be formulated to give different temporal logics.
However, over time, the linear temporal logic $\ltl$ has emerged as 
a standard formulation. 
A major driver for this choice is its economy of operators while 
being expressive;
it just uses modalities $\until$ and $\since$. 
The classical result of Kamp showed that the $\ltl$
logic is expressively complete with respect to $\fo$-definable
properties of words \cite{Kamp}. Moreover, as shown by Sistla and Clarke, 
the logic has elementary \Pspace-complete satisfiability \cite{SC}. 
Yet another class of temporal logics which provides very natural form of 
specification are the interval temporal logics. 
However, their high satisfaction complexity has prevented their widespread use.

It can be seen from these developments that the concerns for 
expressive power of the temporal logic and its algorithmic complexity 
have been major drivers. They directly affect the usability of model checking
tools developed. Several fragments/variants of $\ltl$ have been explored 
to improve its usability. For example, the industry standard PSL/Sugar 
adds regular expressions to $\ltl$. Various forms of counting constructs 
allowing quantitative constraints to be enforced have also been added to 
$\ltl$ and to interval temporal logics. At the same time, keeping 
algorithmic complexity in mind, fragments of $\ltl$ such as $\utl$ with 
low satisfaction complexity have been explored \cite{EVW02,IW09}.
But there are other possibilities.

One less-known such theme is that of ``deterministic logics''. 
In our own experience, while implementing a validity checker 
for an interval temporal logic over word models, we found marked improvements 
in  efficiency when nondeterministic modal operators were replaced by  
deterministic or unambiguous ones \cite{KP}. 
This led to our interest in results on unambiguous languages, 
initiated by Sch\"utzenberger \cite{Sch76}. In a subsequent paper \cite{LPS08} 
we learnt that these could also be thought of as boolean combinations 
of deterministic and co-deterministic products over a small class, 
the piecewise testable languages.
We expanded the scope of our work to studying determinism and guarding 
in modalities at all levels of temporal, timed and first-order logics.
This tutorial is a presentation of temporal logics with deterministic 
as well as guarded modalities, their expressiveness and computational efficiency. 

In Section~\ref{sec:dl}, we begin with deterministic modalities at 
the lowest level, using a couple of 
representative logics, and then we recursively build higher guarded 
deterministic modalities all the way to full temporal logic. 
These ideas were initiated by Kr\"oger \cite{Kro}. 
We cannot  claim that the deterministic modalities are the ones which will be 
preferred at the level of specification.  
Linear temporal logic \ltl\/ continues to be widely used.
What do its ``nondeterministic'' modalities buy for the user? 
In Section~\ref{sec:ic} we show that the introduction of guarded constraints 
specifying counting and simple algebraic operations over an interval, 
arguably an important part of specifying properties, at low modal depth, 
already reaches high levels of full temporal logic, while retaining 
elementary decidability. 

\paragraph{Temporal logics}
When talking of languages (over finite and infinite words), modal logics 
specialize to temporal logics. Words are nothing but rooted coloured 
linear orders where positions in the words denote possible worlds. 
Classically, modalities  $\fut,\past,\nextt,\prev,\until,\since$ 
are widely used. Their semantics is given below. 

Let $w \in \alphbt^+ \cup \alphbt^\omega$ be a word (finite or infinite). 
Let $dom(w)$ denote the set of positions in the word, e.g.
$dom(aba) = \{1,2,3\}$, and for an infinite word $dom(w)= \Nat$. 
We define the semantics of linear temporal logic operators below.

\Math{w,i \models a} iff \Math{a \in w[i]}

\Math{w,i \models \nextt\phi} iff  
\Math{i+1 \in dom(w) \rmand w,i+1 \models \phi}

\Math{w,i \models \prev\phi} iff 
\Math{i-1 \in dom(w) \rmand w,i-1 \models \phi}

\Math{w,i \models \fut\phi} iff for some
\Math{m > i: w,m \models \phi}

\Math{w,i \models \past\phi} iff for some
\Math{m \leq i: w,m \models \phi}

\begin{tabular}{l}
\Math{w,i \models \phi~\until~\psi} iff 
for some \Math{m > i: w,m \models \psi} 
and for all \Math{i < l < m: w,l \models \phi}\\
\Math{w,i \models \phi~\since~\psi} iff 
for some \Math{m < i: w,m \models \psi} 
and for all \Math{m < l < i: w,l \models \phi}\\
\end{tabular} \\
We also have defined operators
$\henceforth\phi = \lnot\fut\lnot\phi$ and
$\always\phi = \lnot\past\lnot\phi$. 
We remark that the operators $\until$ and $\since$ as well as 
the derived $\fut$ and $\past$ operators used in this tutorial are all 
``strict''.  

Let $OPS$ be a set of temporal operators. Then, $TL[OPS]$ defines temporal logic formulae using only the operators from $OPS$ and
the boolean connectives. An interesting question is about the expressive power of such a logic $TL[OPS]$ for various choices of $OPS$. 
For example, operators $\fut,\past,\nextt,\prev$ can be defined using 
$\until, \since$. 
Hence $TL[\fut,\past,\nextt,\prev,\until,\since] \equiv TL[\until,\since]$.

\section{Deterministic Logics and Unambiguous Star-free Langauges}
\label{sec:dl}

Sch\"utzenberger first studied Unambiguous star-free regular languages ($UL$) \cite{Sch76} and gave an algebraic characterization for $UL$. 
Since then, several diverse and unexpected characterizations have emerged for this language class: 
$\Delta_2[<]$ in the quantifier-alternation hierarchy of first-order definable languages \cite{PW97}, 
the two-variable fragment $FO^2[<]$ \cite{TW-until} (without any restriction on quantifier alternation), 
and Unary Temporal Logic \utl\/ \cite{EVW02} are some of the logical characterizations that are well known. 
Investigating the automata for $UL$, Schwentick, Th\'erien and Vollmer
\cite{STV01}  defined Partially Ordered 2-Way Deterministic Automata ($\potdfa$) 
and showed that these exactly recognize the language class $UL$. 
Weis and Immerman  have characterized $UL$ as a boolean combination of ``rankers'' \cite{IW09}.
A survey paper \cite{DGK08} describes this language class and its characterizations.

We go back to Sch\"utzenberger's definition.
A monomial over an alphabet $\alphbt$ is a regular expression of the form $A_0^* a_1 \cdots a_nA_n^*$, 
where $A_i\subseteq \alphbt$ and $a_i\in \alphbt$. 
By definition, $UL$ is the subclass of star-free regular languages which may be expressed as a finite disjoint union of unambiguous monomials: every word that belongs to the language, may be \emph{unambiguously} parsed so as to match a monomial. The uniqueness with which these monomials parse any word is the characteristic property of this language class. 
We explore a similar phenomenon in logics  by introducing the notion of \emph{Deterministic Temporal Logics}.

Given a modality $\mathscr M$ of a temporal logic that is interpreted over a word model, the \emph{accessibility relation} of 
$\mathscr M$ is a relation which maps every position in the word to the set of positions that are accessible by $\mathscr M$. In case of interval temporal logics, the relation is over intervals instead of positions in the word model. The modality is \emph{deterministic} if its accessibility relation is a (partial) function. A logic is said to be deterministic if all its modalities are deterministic. Hence, deterministic logics over words have 
the property of \emph{Unique Parsability} stated below.
\begin{definition}[Unique Parsability]
In the evaluation of a temporal logic formula over a given word, every subformula has a unique position (or interval) in the word at which it must be evaluated. This position is determined by the context of the subformula. 
\end{definition}

In this section, we investigate deterministic temporal logics and their properties. We give constructive reductions between deterministic logics with diverse modalities. We also analyze efficient algorithms for checking their satisfiability. For simplicity we confine ourselves to languages of finite words; the situation for languages of infinite words is not very different. We begin the study with a basic logic of rankers $\tlxy$, and investigate its satisfiability which turns out to be $NP$-complete. It is well known that $\tlxy$ exactly has
the expressive power of $UL$ \cite{IW09,STV01}.
We then look at deterministic interval logic $\uitlpm$ and give a polynomial-time reduction to $\tlxy$. This
reduction relies on a crucial property of ranker directionality investigated by Weis and Immerman \cite{IW09}
and others  \cite{PS14,SShah12}. Several  logics lie between these two logics and  they all have the same expressive power and 
$NP$-complete satisfaction complexity.

In order to go beyond $UL$, we consider a recursive extension of $\tlxy$. This deterministic logic was 
proposed by Kr\"oger and it has  been called AtNext logic in literature \cite{Kro}. We briefly investigate the 
relationship between $\ltl$ and the AtNext logic and show that both have the same expressive power.
However, the hierarchies induced by the two logics are quite different.

\subsection{The logic of Rankers}

We define the logic of rankers as follows.

\paragraph{Syntax}
\[
\phi ~:=~ 
\begin{array}[t]{l}
a ~\mid~ \top ~\mid~ { X_a \phi_1}  ~\mid~ { Y_a \phi_1}
~\mid~ { SP \phi_1}  ~\mid~  { EP\phi_1}
~\mid~\phi_1\lor\phi_2 ~\mid~ \neg\phi_1  ~

\mid~  \weakx{a} \phi_1 ~\mid~ \weaky{a} \phi_1
~\mid~\xunit \phi_1 ~\mid~ \yunit \phi_1 
\end{array}
\]
Let
$EP\phi = \lnot\nextt\top \limplies \phi$ be a derived operator.
For convenience, we have defined $\tlxy$ with many modalities. It can be shown (see \cite{SShah12})
that it is sufficient to have only $X_a$, $Y_a$ and $EP$ modalities; all other operators can be eliminated
giving an equivalent formula.

Let $Size(\phi)$ denote size (i.e. number of operators and atomic formulae) occurring in $\phi$.

\paragraph{Semantics} Given word $w \in A^+$ and $i \in dom(w)$ we have \\
\begin{tabular}{rcl}
$w,i\models \top$&& \\
$w,i\models X_a\phi$ & iff & $\exists j>i ~.~ w[j]=a$ and $\forall i<k<j. w[k]\neq a$ 
  and $w,j\models\phi$. \\
$w,i\models Y_a\phi$ & iff & $\exists j<i ~.~ w[j]=a$ and $\forall j<k<i. w(k[k]eq a$ 
  and $w,j\models\phi$.\\
$w,i\models \weakx{a}\phi$ & iff & $\exists j\geq i ~.~ w[j]=a$ and $\forall i\leq k<j. w[k]\neq a$ 
  and $w,j\models\phi$.\\
$w,i\models \weaky{a}\phi$ & iff & $\exists j\leq i ~.~ w[j]=a$ and $\forall j<k\leq i. w[k]\neq a$
  and $w,j\models\phi$.
\end{tabular}
\medskip

\noindent The language accepted by a \tlxy\/ formula $\phi$ is given by $\mathcal L(\phi)=\{w ~\mid~ w,1\models\phi\}$.
\begin{example}
Consider the $LTL$ formula $G(a \Rightarrow F b)$. This is equivalent to $\tlxy$ formula 
$\neg EP(Y_a \neg X_b \top)$.
\end{example}
\begin{example} Consider the Unambiguous monomial 
$\{a,c,d\}^* \cdot { c} \cdot \{a\}^* \cdot { b} \cdot \{a,b,c,d\}^*$. Then, its language
is equivalent to the language of $\tlxy$ formula $X_bY_c\neg( X_dX_b\neg Y_b T)$.
\end{example}

\begin{definition}[Ranker \cite{IW09}]
A ranker is a $\tlxy$ formula which does not use boolean operators $\neg, \land, \lor$ and it only has atomic formula
$\top$ (i.e. the use of atomic proposition $a$ is not allowed). 
\end{definition}
For example, $EP(Y_a \xunit \top)$ is a ranker. 

A ranker $RK$ (also called a turtle program \cite{STV01}) performs scans over a word $w$ which end at a position in the word or the scan fails.
The outcome of scan (i.e. last position) is denoted by $\lpos(RK) \in dom(w) \cup \{\bot\}$ where $\bot$ denotes the failure of the scan.  Note that the ranker search always starts at the initial position in the word. Thus, $\lpos(RK) = Pos(w,1,RK)$ where
\[
\begin{array}{l}
 Pos(w,i,\top) = i \\
 Pos(w,i,SP(RK)) = Pos(w,1,RK) \\ 
 Pos(w,i,\weakx{a}(RK)) = Pos(w,j,RK) ~~\mbox{if}~~j \geq i \land w[j]=a \rmand \forall i \leq k<j: w[k]\neq a \\
 Pos(w,i,\weakx{a}(RK)) = \bot ~~\mbox{if}~~ \forall j \st j \geq i \Rightarrow w[j] \not= a \\
 Pos(w,i,X_a(RK)) = Pos(w,j,RK) ~~\mbox{if}~~j>i \land w[j]=a \rmand \forall i<k<j: w[k]\neq a \\
 Pos(w,i,X_a(RK)) = \bot ~~\mbox{if}~~ \forall j \st j > i \Rightarrow w[j] \not= a \\
 Pos(w,i,\xunit(RK)) = Pos(w,i+1,RK) ~~~\mbox{if}~~i+1 \in dom(w) \\
 Pos(w,i,\xunit(RK)) = \bot ~~~\mbox{if}~~i+1 \notin dom(w) \\
\end{array}
\]
The remaining cases are similar and omitted.

Consider a formula $\phi$ and its subformula  $\beta$ occurring in context $\alpha[-]$, i.e. $\phi = \alpha[\beta]$. We shall call such $\alpha[\beta]$ as a subterm. 
With each subterm, we associate a ranker  denoted $Ranker(\alpha[ ~ ])$ which identifies the unique position
in word where subformula $\beta$ needs to be evaluated. This ranker does not depend on the subformula $\beta$ but only on the context $\alpha[ ~]$.
We give rules for calculating the Ranker of a subterm.
\[
 \begin{array}{l}
   Ranker([~ ]) = SP \top \\
   Ranker(\alpha(OP [ ~])) = RK( OP \top) ~~\mbox{where}~~ Ranker(\alpha[ ~]) = RK \top ~~\mbox{and} \\
          \hspace*{5cm}  OP \in \{ X_a, Y_a, \weakx{a}, \weaky{a}, \xunit, \yunit, SP, EP \} \\
   Ranker(\alpha( \beta_1 \lor [ ~])) =   Ranker(\alpha( \beta_1 \land  [ ~])) =   Ranker(\alpha[ ~])  \\
   Ranker(\alpha( \neg [ ~])) =   Ranker(\alpha[ ~])  \\
 \end{array}
\]

The main lemma below relates truth of atomic formulae at their ranker positions to the truth of the whole formula. 
\begin{lemma}[unique parsing]
\label{lem:xymodelchecking}
 Let $\phi$ be a formula of $\tlxy$ and let $t_i = \alpha_i[\beta_i]$ for $1 \leq i \leq k$ be all its subterms such that each 
 $\beta_i$ is an atomic formula (of the form $a$ or $\top$). Consider the witness propositional formula $W$ obtained by replacing each such subformula by 
 propositional letter $p_i$, and by omitting all the temporal operators but keeping all the boolean operators. Also, for any $w \in A^+$
 let $\mu_w$ be a valuation assiging $p_i=true$ iff $\lpos(Ranker(\alpha_i[~])=j\not=\bot \land w[j] \models_{prop} \beta_i$. Thus, valuation 
 $\mu$ records whether
 atomic formula $\beta_i$ holds at its ranker position. Then, $w,1 \models \phi$ iff $\mu \models_{prop} W$. 
\end{lemma}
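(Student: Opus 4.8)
The plan is to prove the statement by structural induction on the formula $\phi$, generalizing the claim so that it speaks about every position reachable as a ranker position rather than only the initial position. For each subterm $\alpha[~]$ occurring in $\phi$, let $j_\alpha = \lpos(Ranker(\alpha[~]))$. The strengthened induction hypothesis I would carry is: for every subterm $\alpha[\psi]$ of $\phi$, if $j_\alpha \neq \bot$ then $w, j_\alpha \models \psi$ iff $W_\psi$ evaluates to true under $\mu_w$, where $W_\psi$ is the witness formula of the subformula $\psi$ (with the propositional letters $p_i$ inherited from the global enumeration). The key observation making this work is that $Ranker$ is defined compositionally: $Ranker(\alpha(OP[~]))$ is obtained from $Ranker(\alpha[~])$ by appending $OP$, which exactly mirrors how the semantic clause for $OP\psi$ moves the evaluation point. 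So the ranker position of an inner subterm is the position to which the outer modality relocates evaluation.

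The induction would proceed through the cases of the grammar. For atomic $\beta_i$ (of the form $a$ or $\top$): by definition of $\mu_w$, $p_i$ is true iff $j \neq \bot$ and $w[j] \models_{prop} \beta_i$, which under the hypothesis $j_\alpha \neq \bot$ is exactly $w, j_\alpha \models \beta_i$; and $W_{\beta_i} = p_i$. For boolean connectives the rules $Ranker(\alpha(\neg[~])) = Ranker(\alpha[~])$ and likewise for $\land, \lor$ say the evaluation point does not move, so the claim transfers directly because both the semantics and the witness formula commute with $\neg, \land, \lor$. For a modal case, say $\psi = X_a \psi_1$ inside context $\alpha[~]$ with $j_\alpha \neq \bot$: by the $Pos$ clauses, $Ranker(\alpha(X_a[~]))$ has outcome position $j'$ where $j'$ is the first $a$-position strictly after $j_\alpha$, or $\bot$ if none exists. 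If $j' = \bot$, then $w, j_\alpha \models X_a\psi_1$ is false (no suitable $a$-position); correspondingly every propositional letter appearing in $W_{\psi_1}$ is set to false by $\mu_w$ for the relevant subterms — here I must be careful (see below). If $j' \neq \bot$, then $w, j_\alpha \models X_a\psi_1$ iff $w, j' \models \psi_1$, and the latter equals the truth of $W_{\psi_1}$ under $\mu_w$ by the induction hypothesis applied to the subterm $\alpha(X_a[~])[\psi_1]$; since $W_\psi = W_{\psi_1}$ (temporal operators are omitted in the witness), we are done. The other modalities $Y_a, \weakx{a}, \weaky{a}, \xunit, \yunit, SP, EP$ are handled identically using their respective $Pos$ clauses. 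Finally, applying the strengthened claim to the whole formula with the empty context, $Ranker([~]) = SP\top$, whose outcome position is $1$, gives $w, 1 \models \phi$ iff $\mu_w \models_{prop} W$.

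The main obstacle, and the only genuinely delicate point, is the treatment of failing rankers ($j' = \bot$) together with the way $\mu_w$ is defined globally. When an outer modality fails, the inner subterm has ranker position $\bot$, so by the definition of $\mu_w$ all propositional letters $p_i$ associated with atomic subformulas nested below that failing modality are set to $false$. I need this to be consistent with $W_\psi$ correctly evaluating to $false$ for a formula $\psi = X_a\psi_1$ with no reachable $a$. This requires knowing that when all atomic letters of $W_{\psi_1}$ are false, $W_{\psi_1}$ itself is false — which is \emph{not} true in general for arbitrary propositional formulas (e.g.\ $\neg p_1$). The resolution is that the statement as phrased is about truth at the \emph{top} level and the witness $W$ of the whole formula, and one checks that the failing-ranker case only ever makes a modal subformula false, never requires reasoning about the internal boolean structure beneath a failed modality in isolation; more precisely, $Pos$ propagates $\bot$ upward through the $Ranker$ construction, and a modal subformula $OP\psi_1$ evaluated at a failed position is simply false, independently of $\psi_1$. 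So rather than claiming $W_{\psi_1}$ is false, the correct bookkeeping is that the \emph{modal} witness clause (which in $W$ is invisible, the operator being omitted) — wait, this is exactly why the lemma is stated the way it is: the omission of temporal operators in $W$ means $W$ only records the atomic letters, and the claim must be proved for $W$ as a whole by tracking, for each subterm, whether its ranker position is defined. I would make the induction hypothesis explicitly conditional on $j_\alpha \neq \bot$, and observe that whenever a context's ranker fails, that subterm does not contribute an independent conjunct to the final evaluation — it sits under a modality that has already been "collapsed" in $W$, and its atomic letters are all false, which is the value that makes the (omitted) modal clause false; formalizing this "collapsing" is where I would spend the most care, likely by an auxiliary observation that if $Ranker(\alpha[~])$ fails then every subterm $\alpha'[~]$ refining $\alpha[~]$ also fails, so those $p_i$ are uniformly $false$ and play no harmful role.
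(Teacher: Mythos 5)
The paper itself gives no proof of this lemma (it is stated and then only illustrated by an example), so there is no official argument to compare routes with; the structural induction you set up, with the strengthened hypothesis conditional on $j_\alpha\neq\bot$, is clearly the intended skeleton, and your atomic, boolean, and defined-position modal cases go through. The genuine gap is exactly the case you flag and do not close: a modality whose extension ranker fails. Your proposed repair --- that all letters $p_i$ below a failed modality are set to $\false$ by $\mu_w$ and therefore ``play no harmful role'' --- does not work, because the modal operator has been erased in $W$, so the erased occurrence contributes the value of $W_{\psi_1}$ under that all-false assignment, and this need not be false once $\psi_1$ contains negation. Concretely, take $\phi = X_a\neg b$ and $w=cc$: the only atomic subterm is $b$ in context $X_a(\neg[~])$, its ranker $SP\,X_a\top$ fails on $w$, hence $\mu_w(p_1)=\false$ and $W=\neg p_1$ is true under $\mu_w$, yet $w,1\not\models X_a\neg b$ because $w$ contains no $a$. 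So the difficulty you postponed is not a matter of more careful bookkeeping: the biconditional as literally stated is violated in this case, and no induction can establish it without first amending the statement.

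A correct completion therefore has to strengthen the witness construction rather than the induction: for instance, introduce for each modal context an extra ``definedness'' letter $q$ (so the erased operator is replaced by a conjunct $q\land{}$ rather than dropped silently), with $\mu_w(q)=\true$ iff the corresponding ranker succeeds; or restrict attention to formulas in a normal form where every atomic occurrence sits positively below each modality, so that the all-false assignment does force $W_{\psi_1}$ false. With such a repair your induction closes cleanly --- in the $j'=\bot$ case both the semantic value of $OP\,\psi_1$ and its witness become false via the definedness letter --- and the downstream uses (the truth-checking corollary and the small-model lemma) survive unchanged. As written, however, your proposal stops precisely at the point where the lemma needs patching, so it is not yet a proof.
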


\begin{example}
 Consider formula $ \phi = EP(Y_a ( \neg X_b \top ~~\lor~~\xunit c))$. Then, we have atomic subformulae 
 (occurrences) $\beta_1=\top$ and $\beta_2=c$ 
 with corresponding rankers $RK_1 = EP(Y_a X_b \top)$ and $RK_2 = EP(Y_a \xunit \top)$. 
The witness propositonal formula is $W$ is $(\neg p_1 \lor p_2)$. 
Consider a word $w = a b a d b c$. Then, $\lpos(RK_1)=5$ and $\lpos(RK_2)=4$. 
Hence $\mu_w(p_1)=\true$ and $\mu_w(p_2)=\false$.
 It is easy to see that $\mu_w \not \models W$. It is also clear that $w,1 \not \models \phi$.
\end{example}

\begin{corollary}
\label{coro:xytruthchecking}
Checking whether $w,1 \models \phi$ can be carried out in time $|w| \times |\phi|^3$.
\end{corollary}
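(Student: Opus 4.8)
The plan is to read Lemma~\ref{lem:xymodelchecking} directly as an algorithm. Given $w$ and $\phi$, the first step is a single traversal of $\phi$ that enumerates all subterms $t_i=\alpha_i[\beta_i]$ whose hole is filled by an atomic formula; since each such $t_i$ is pinned down by one occurrence of an atomic subformula of $\phi$, there are $k\le|\phi|$ of them. During the same traversal I would, following the recursive rules for $Ranker$, accumulate for each $\beta_i$ the ranker $RK_i=Ranker(\alpha_i[~])$ — a plain sequence of at most $|\phi|$ temporal operators applied to $SP\top$, boolean operators being discarded — and simultaneously build the witness propositional formula $W$, of size at most $|\phi|$, by replacing each $\beta_i$ by a fresh letter $p_i$ and deleting the temporal operators. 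By Lemma~\ref{lem:xymodelchecking}, $w,1\models\phi$ iff $\mu_w\models_{prop}W$, where $\mu_w(p_i)=\true$ exactly when $\lpos(RK_i)=j\ne\bot$ and $w[j]\models_{prop}\beta_i$. So it remains to compute each $\lpos(RK_i)$ and then evaluate one propositional formula.

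For a single ranker $RK_i$ I would compute $\lpos(RK_i)=Pos(w,1,RK_i)$ by processing its operators from the outside in, maintaining the current position (starting at $1$): an $X_a$ or $Y_a$ step moves to the next (respectively previous) position carrying $a$, a $\weakx{a}$ or $\weaky{a}$ step also allows the current position, an $\xunit$ or $\yunit$ step moves by one, an $SP$ step resets to $1$, and any step may return $\bot$, after which the outcome stays $\bot$. Each step costs $O(|w|)$ by a linear scan of $w$ (a precomputed next/previous-occurrence table would make it $O(1)$, but this is not needed), so $RK_i$ is evaluated in $O(|\phi|\cdot|w|)$ and all $k\le|\phi|$ rankers in $O(|\phi|^2\cdot|w|)$. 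Setting each $\mu_w(p_i)$ from the outcome is $O(1)$ per subterm, and evaluating $W$ under $\mu_w$ is $O(|\phi|)$. The total is $O(|\phi|^2\cdot|w|)$, comfortably inside the claimed $|w|\times|\phi|^3$; a single shared traversal that reuses common ranker prefixes of sibling atomic subterms would reduce it further.

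I expect no real obstacle beyond Lemma~\ref{lem:xymodelchecking} itself: the rest is bookkeeping. The two points that need care are the uniform bound $|RK_i|=O(|\phi|)$ — which holds because $Ranker$ only ever prepends temporal operators along the root-to-$\beta_i$ path — and the $O(|w|)$ cost of each scan step, so that the nesting of rankers does not push the running time past cubic. Treating $\bot$ monotonically (once a scan fails, the whole ranker fails) and checking $w[j]\models_{prop}\beta_i$ in constant time then complete the argument.
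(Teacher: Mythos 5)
Your proposal is correct and follows essentially the same route as the paper's own proof: evaluate each of the at most $|\phi|$ rankers by a position-by-position scan costing $O(|w|\times|\phi|)$ each, record the truth of the atomic formula at the resulting position to build $\mu_w$, and then evaluate the witness formula $W$, invoking Lemma~\ref{lem:xymodelchecking} for correctness. Your accounting even sharpens the bound to $O(|w|\times|\phi|^2)$, which sits comfortably within the stated $|w|\times|\phi|^3$.
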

\begin{proof}
Given the word, checking whether an atomic formula is true or false at its ranker position can be done in time $|w| \times |\phi|$. 
Number of such atomic formulae are linear in the size of $|\phi|$. This determines $\mu_w$. Given $\mu_w$, 
evaluating the propositional formula $W$ which is atmost of  size $|\phi|$ will take time at most linear in size of $\phi$. 
\end{proof}

We now establish a small model property for logic $\tlxy$.
\begin{lemma}
\label{lem:xysmallmodel}
 Let $\phi$ be a formula of $\tlxy$. 
 If $\phi$ is satisfiable then there exists $w$ with length $|w| = Size(\phi)$ such that $w,1 \models \phi$.
\end{lemma}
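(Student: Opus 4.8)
The plan is to exploit the unique parsing characterization (Lemma~\ref{lem:xymodelchecking}): satisfaction of $\phi$ over $w$ depends only on the truth values of the finitely many atomic subformulae $\beta_1,\dots,\beta_k$ at their respective ranker positions $\lpos(Ranker(\alpha_i[~]))$, via the propositional witness formula $W$. Since $k \le Size(\phi)$, at most $Size(\phi)$ positions of $w$ are ``relevant'' — those actually visited as the ranker $Ranker(\alpha_i[~])$ for some $i$, together with the positions passed through during those scans. The idea is to start from an arbitrary model $w$ (which exists since $\phi$ is satisfiable) and repeatedly delete letters that are not needed, arguing that deletion preserves the outcome of every ranker scan, hence preserves $\mu_w$, hence preserves $w,1\models\phi$.

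The key steps, in order, are: (1) Fix a model $w$ with $w,1\models\phi$. Enumerate the rankers $RK_i = Ranker(\alpha_i[~])$ for $1\le i\le k$, $k\le Size(\phi)$, and let $P\subseteq dom(w)$ be the union of all positions $\lpos(RK_i)\ne\bot$. Each ranker is a sequence of at most $Size(\phi)$ scan-steps, and the final position of each step is determined; call these the \emph{milestone positions}. (2) Establish the crucial deletion invariant: if $j\in dom(w)$ is \emph{not} a milestone position of any $RK_i$, then deleting letter $w[j]$ does not change $\lpos(RK_i)$ for any $i$ — because an $X_a$-scan from a milestone $p$ to the next milestone $q$ reads the sub-word strictly between $p$ and $q$ only to verify ``no $a$ occurs there'', and removing a non-$a$ letter from that stretch cannot create or destroy an $a$; similarly for $Y_a$, $\weakx{a}$, $\weaky{a}$; for the unit steps $\xunit,\yunit$ one must be slightly more careful (the successor/predecessor relation changes), so one should treat each position immediately following a milestone as itself protected, or equivalently protect the finitely many positions referenced by unit steps. (3) Since only $k\cdot Size(\phi) \le Size(\phi)^2$ positions need protecting, iterate deletion of unprotected positions until none remain, obtaining $w'$ with $w',1\models\phi$ and $|w'|\le Size(\phi)^2$; but in fact each $RK_i$ contributes at most $Size(\phi)$ milestones and there are at most $Size(\phi)$ atomic subformulae, and milestones of nested rankers are shared, so a more careful count along the syntax tree of $\phi$ gives $|w'| \le Size(\phi)$. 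If the bound still exceeds $Size(\phi)$, pad the final word back up with an arbitrary letter placed in a position that is harmless (e.g.\ at the far right, after all milestones, or in an already-protected-as-``any letter'' stretch) to hit $|w| = Size(\phi)$ exactly.

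The main obstacle I expect is getting the counting to land precisely on $|w| = Size(\phi)$ rather than a looser polynomial like $Size(\phi)^2$: the rankers $RK_i$ overlap heavily (the ranker of a subterm is a prefix of the rankers of its sub-subterms), so the total number of distinct milestone positions is governed by the number of distinct temporal-operator occurrences in $\phi$, which is at most $Size(\phi)$. Making this sharing argument rigorous — essentially that the set of all milestones, over all $i$, equals $\{\,Pos(w,1,RK)\mid RK \text{ a prefix-ranker of some operator-occurrence in }\phi\,\}$, which has size $\le Size(\phi)$ — is the delicate part; the rest (deletion preserves scans, then pad) is routine. The secondary subtlety is the treatment of the unit modalities $\xunit,\yunit$ under deletion, which I would handle by simply declaring every position that is the \emph{target} of a unit step (not only milestones that are $X_a/Y_a$ scan endpoints) to be protected, which adds at most $Size(\phi)$ more protected positions and is absorbed into the same bound, or by deleting only maximal blocks between consecutive milestones in a way that keeps adjacency intact where a unit step crosses.
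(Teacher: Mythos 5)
Your proof follows essentially the same route as the paper's: the paper keeps exactly the positions $\lpos(RK)$ for the rankers associated with \emph{all} subterms of $\phi$ (your ``milestones'', shared along the syntax tree and hence at most $Size(\phi)$ many), restricts $w$ to these positions in one shot, and shows by induction on ranker length that every ranker endpoint is preserved, concluding via the unique-parsing lemma --- which is your deletion-invariant argument, with unit-step targets automatically protected because they are themselves ranker positions. The only divergence is your padding step to reach $|w|=Size(\phi)$ exactly, which the paper does not attempt (its argument, like yours, really yields $|w|\le Size(\phi)$) and which is better omitted, since appending letters to the right can disturb $EP$ and negated scans.
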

\begin{proof}
 Let $Rankerset(\phi)$ denote the set of rankers associated with each subterm of $\phi$. It is clear that
 size of $Rankerset(\phi)$ is at most $Size(\phi)$. We now define all the positions which are characterized by rankers. Since
 ranker scan starts at position $1$, this is always included in our set.
 Consider 
 $Rankersetpos_w(\phi) = \{ \lpos(RK) ~\mid~ RK \in Rankerset(\phi)\} \cup \{1\} - \{ \bot\} $. Let 
 $v = w \downarrow Rankersetpos_w(\phi)$ denote the word obtained by removing letters not at positions in $Rankersetpos_w(\phi)$. Hence size of $v$ is at most $Size(\phi)$.
 Also let $f: dom(w) \rightarrow dom(v)$ give the mapping of an undeleted position in $w$ to its corresponding
 position in $v$. Then, it is easy to see that $f(\lpos(RK)) = \mathit{\ell Pos}_v(RK)$ for each $RK \in Rankersetpos_w(\phi)$. This can 
 formally be proved by induction on the length of the ranker.
 From this and Lemma~\ref{lem:xymodelchecking} it is clear that $w,1 \models \phi ~~\fif~~ v,1 \models \phi$. Thus, $\phi$ has a linear sized model if it has a model.
\end{proof}

\begin{theorem}
 Satisfiability of $\tlxy$ is NP-complete.
\end{theorem}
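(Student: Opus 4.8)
The plan is to prove the two halves of the statement separately, the upper bound being essentially a corollary of work already done. For membership in NP I would argue as follows. Suppose $\phi \in \tlxy$ is satisfiable. By Lemma~\ref{lem:xysmallmodel} it then has a model $w$ with $|w| = Size(\phi)$; taking the alphabet to consist of the letters actually occurring in $\phi$ together with a single spare letter, such a $w$ is a string of polynomially many symbols, hence a polynomial-size certificate. A nondeterministic procedure guesses $w$ and then verifies $w,1 \models \phi$ in time $|w| \times |\phi|^3$ using Corollary~\ref{coro:xytruthchecking}. Thus satisfiability of $\tlxy$ lies in NP.

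For NP-hardness I would exhibit a polynomial-time reduction from propositional satisfiability. Given a propositional formula $\psi$ over variables $x_1,\dots,x_n$, fix the alphabet $\{a,b\}$ and observe that, starting from position $1$, the ranker $\xunit^{\,i-1}\top$ (the successor operator iterated $i-1$ times) lands at position $i$ whenever the word has length at least $i$. Accordingly, let $\Phi$ be the $\tlxy$ formula obtained from $\psi$ by replacing every occurrence of $x_i$ with $\xunit^{\,i-1} a$ and conjoining $\xunit^{\,n-1}\top$, which forces any model to have at least $n$ positions. Then $Size(\Phi) = O(n\cdot Size(\psi))$, so the reduction is polynomial. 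If $\nu$ satisfies $\psi$, the length-$n$ word $w$ defined by $w[i]=a$ iff $\nu(x_i)=\true$ satisfies $\Phi$, because in a model of length at least $n$ the subformula $\xunit^{\,i-1} a$ is true at position $1$ precisely when $w[i]=a$; conversely, any model of $\Phi$ has length at least $n$ and yields the satisfying valuation $\nu(x_i) \defn (w[i]=a)$.

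I do not anticipate a genuine obstacle. The one point requiring care in the reduction is guaranteeing that the model has enough positions to encode all $n$ bits, and that is exactly what the conjunct $\xunit^{\,n-1}\top$ arranges; the step that would otherwise be the crux, namely producing a polynomially bounded model, is already settled by the linear-size model property of Lemma~\ref{lem:xysmallmodel} (which in turn rests on the unique parsing Lemma~\ref{lem:xymodelchecking}). If one preferred to avoid an explicit reduction, NP-hardness could instead be imported from the known expressive equivalence of $\tlxy$ with $FO^2[<]$ and with $UL$, whose satisfiability problems are NP-complete \cite{EVW02,IW09}; but the self-contained argument above fits the spirit of this section better.
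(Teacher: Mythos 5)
Your upper-bound argument is exactly the paper's: guess a model of size $Size(\phi)$ guaranteed by Lemma~\ref{lem:xysmallmodel} and verify it in polynomial time via Corollary~\ref{coro:xytruthchecking}. Where you differ is the lower bound. The paper dispenses with NP-hardness in one sentence, ``since $\tlxy$ includes propositional formulae'' --- which is immediate if one reads the word models as in the paper's general semantics ($w,i\models a$ iff $a\in w[i]$, i.e.\ positions carry sets of propositions), so propositional SAT sits inside the logic verbatim at modal depth $0$. You instead give an explicit reduction that encodes the $i$-th propositional variable as $\xunit^{\,i-1}a$ over a two-letter alphabet. That is a genuinely different (and more careful) route: it remains valid under the stricter reading used in the $\tlxy$ semantics itself, where each position holds a single letter and boolean combinations of letter atoms at one position are mutually exclusive and hence not directly SAT; the price is a quadratic blowup ($O(n\cdot Size(\psi))$), which is harmless. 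Your reduction is correct as stated; the only cosmetic remark is that the padding conjunct $\xunit^{\,n-1}\top$ is not strictly needed, since in a short model the formulas $\xunit^{\,i-1}a$ for out-of-range $i$ simply evaluate to false, which corresponds to setting those variables false.
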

\begin{proof}
 By Lemma \ref{lem:xysmallmodel}, we can nondeterministically guess a small word of size linear in size of $\phi$.
 Note that number of bits needed to represent this is $|\phi| log |\phi|$ since alphabet cannot be larger than
 the size of $\phi$.
 Checking that $w,1 \models \phi$ can be done in time polynomial in $w$ and $\phi$ by 
Corollary~\ref{coro:xytruthchecking}. Thus, satisfiability is in $NP$. 
Since logic $\tlxy$ includes propositional formulae, its satisfiability is also $NP$-hard.
\end{proof}

\subsection{Deterministic Interval Logic \uitlpm}
Now we consider a seemingly much more powerful deterministic interval temporal logic $\uitlpm$
(based on a logic in \cite{LPS}). 
We show that this logic can be reduced to
$\tlxy$ in polynomial time preserving models. This reduction also makes use of rankers and an additional critical property
called ranker directionality.

\label{sec:uitlpm}
In this section, we introduce the logic \uitlpm\/  and show that it is no more expressive than $UL$, by giving an effective conversion from \uitlpm\/ formulas to their corresponding language-equivalent \tlxy\/ formula. The conversion is similar to the conversion from \uitl\/ to \tlxy\/, as given in \cite{DKL10}.

\subsubsection{\uitlpm: Syntax and Semantics}
The syntax and semantics of \uitlpm\/ are as follows:
\[
\begin{array}{l}
D ::= \top ~\mid~ a ~\mid~ \pti ~\mid~ \mathit{unit} ~\mid~ SP\phi ~\mid~ EP\phi ~\mid~
D_1 \firsta D_2 ~\mid~ D_1 \lasta D_2 ~\mid~ D_1 \firstp{a} D_2 ~\mid~D_1 \lastm{a} D_2 ~\mid ~\\
\hspace{1cm} \succr D_1 ~\mid~ \predr D_1 ~\mid~ 
\succp D_1 ~\mid~ \predm D_1 ~\mid~ 
D_1 \lor D_2 ~\mid~ \neg D 
\end{array}
\]

Let $w$ be a nonempty finite word over $\alphbt$ 
and let $dom(w) = \{ 1, \ldots, |w|\}$ 
be the set of positions. Let 
$INTV(w) = \{ [i,j] ~\mid~ i,j \in dom(w), i \leq j \}~\cup~ \{\bot\}$ be the set of 
intervals over $w$, where $\bot$ is a special symbol to denote an undefined interval. For an interval $I$, let $l(I)$ and $r(I)$ denote the left and right endpoints of $I$. Further, if $I=\bot$, then $l(I)=r(I)=\bot$. The satisfaction of a formula $D$ is defined
over intervals of a word model $w$ as follows. \\
\[\begin{array}{l}
w,[i,j] \models \top \rmiff [i,j]\in INTV(w) \rmand [i,j]\neq\bot\\
w,[i,j]\models \pti \rmiff i=j\\
w,[i,j]\models \mathit{unit} \rmiff j=i+1\\
w,[i,j]\models SP\phi \rmiff w,[i,i]\models\phi\\
w,[i,j]\models EP\phi \rmiff w,[j,j]\models\phi
\end{array}
\]
\[\begin{array}{l}
w,[i,j] \models D_1 \firsta D_2 \rmiff \rmsome k: i \leq k \leq j \st ~~  
w[k]=a \rmand \\
\hspace*{1cm}  (\rmall m: i \leq m < k \st w[m] \neq a) \rmand 
    w,[i,k] \models D_1  \rmand w,[k,j] \models D_2  \\
w,[i,j] \models D_1 \lasta D_2 \rmiff  \rmsome k: i \leq k \leq j \st ~~  
w[k]=a \rmand  \\ 
\hspace*{1cm} (\rmall m: k < m \leq j \st w[m] \neq a) \rmand  
    w,[i,k] \models D_1  \rmand w,[k,j] \models D_2  \\
w,[i,j] \models D_1 \firstp{a} D_2 \rmiff \rmsome k: k \geq j \st ~~  
w[k]=a \rmand \\
\hspace*{1cm}  (\rmall m: i \leq m < k \st w[m] \neq a) \rmand 
w,[i,k] \models D_1  \rmand w,[j,k] \models D_2  \\
w,[i,j] \models D_1 \lastm{a} D_2 \rmiff  \rmsome k: k \leq i \st ~~  
w[k]=a \rmand  \\ 
\hspace*{1cm} (\rmall m: k < m \leq j \st w[m] \neq a) \rmand
w,[k,i] \models D_1  \rmand w,[k,j] \models D_2  \\
w,[i,j] \models \succr D_1 \rmiff i< j \rmand w,[i+1,j]\models D_1\\
w,[i,j] \models \predr D_1 \rmiff i< j \rmand w,[i,j-1]\models D_1\\
w,[i,j] \models \succp D_1 \rmiff j<|w|\rmand w,[i,j+1]\models D_1\\
w,[i,j] \models \predm D_1 \rmiff i>1 \rmand w,[i-1,j]\models D_1\\
\end{array}
\]
The language $\mathcal L(\phi)$ of a \uitlpm\/ formula $\phi$  is given by 
$\mathcal L(\phi) = \{w ~ \mid ~ w,[1,|w|] \models \phi\}$. 
Define $\aloo{A} = \pti~\lor~\mathit{unit}~\lor~\neg\bigvee\limits_{b\not\in A}(\succr\predr(\top F_b \top))$.
Hence, $w,[i,j]\models \aloo{A}$ if and only if $\forall i<k<j ~.~ w[k]\in A$.

\begin{example}
The language of unambiguous monomial 
$\{a,c,d\}^* \cdot { c} \cdot \{a\}^* \cdot { b} \cdot \{a,b,c,d\}^*$.
given earlier may be specified by the \uitlpm\/ formula $(\top ~L_c~ \aloo{a} )~F_b~\top)$. On the
other hand, the
formula $\phi ~=~ (\top L_b { (\neg \aloo{\neg c})} ) L_a \top$ states
that between last $a$ and its previous $b$ there is at least one $c$.
\end{example}

\subsubsection{Ranker Directionality}

Given a ranker and a word, it is possible to define by a $\tlxy$ formula whether we are to the left or right of the ranker's
characteristic position. This is called \emph{ranker directionality}. 
This property of rankers was investigated by Weis and Immerman \cite{IW09}
and Dartois, Kufleitner, Lauser \cite{DKL10}.

For a ranker formula { $\psi$}, we can define $\tlxy$ formulae 
{ $\pfless(\psi)$, $\pfleq(\psi)$, $\pfgreat(\psi)$, $\pfgeq(\psi)$}
satisfying the following lemma.
\begin{lemma}[Ranker Directionality] 
$\forall w\in\alphbt^+$ and $\forall i\in dom(w)$, if $\lpos(\psi)\neq\bot$, then
\begin{itemize}
\item $w,i\models { \pfless}(\psi)$ iff $i<\lpos(\psi)$
\item $w,i\models{ \pfleq}(\psi)$ iff $i\leq \lpos(\psi)$
\item $w,i\models { \pfgreat}(\psi)$ iff $i>\lpos(\psi)$
\item $w,i\models { \pfgeq}(\psi)$ iff $i\geq \lpos(\psi)$
\end{itemize}
{Moreover, the  Size of these formulae are linear in size of $\psi$.}
\end{lemma}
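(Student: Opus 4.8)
The plan is to proceed by induction on the structure of the ranker formula $\psi$, defining the four directionality formulae $\pfless(\psi)$, $\pfleq(\psi)$, $\pfgreat(\psi)$, $\pfgeq(\psi)$ simultaneously. Since the four are interdefinable up to the atomic-step modalities ($\pfleq(\psi)$ is $\pfless(\psi)\lor (\text{``we are at }\lpos(\psi)\text{''})$, and $\pfgreat,\pfgeq$ are the negations of $\pfleq,\pfless$ under the hypothesis $\lpos(\psi)\neq\bot$), I would focus the construction on, say, $\pfgeq(\psi)$ and $\pfgreat(\psi)$ and derive the others. The base case is $\psi = SP\top$, whose characteristic position is $1$: here $\pfgeq(SP\top) = \top$, $\pfgreat(SP\top) = Y_? \ldots$ — more simply, $\pfgreat(SP\top)$ is ``there is some earlier position'', expressible as $\bigvee_{a\in\alphbt} Y_a\top$, and $\pfless(SP\top) = \false$, $\pfleq(SP\top)=\false$ except at position $1$, which is $\neg\bigvee_{a} Y_a\top$. (One should also handle $\weakx{a}/\weaky{a}$-rooted or $\xunit/\yunit$-rooted bases, but these reduce to the inductive step applied to $SP\top$.)

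For the inductive step, suppose $\psi = \psi'\, OP$ where $OP \in \{X_a, Y_a, \weakx{a}, \weaky{a}, \xunit, \yunit\}$ and we already have directionality formulae for $\psi'$ with $\lpos(\psi') = j' \neq \bot$; write $j = \lpos(\psi)$. Take the case $OP = X_a$, so $j$ is the first $a$-position strictly after $j'$. The key observation is that the statement $i > j$ can be phrased relative to $j'$: we are strictly past the first $a$ after $j'$ iff we are strictly past $j'$ \emph{and} there is an $a$ somewhere in the half-open interval $(j', i)$. The first conjunct is $\pfgreat(\psi')$, which is available by induction. The second conjunct, ``some $a$ strictly between $\lpos(\psi')$ and the current position,'' is the place where ranker directionality feeds back on itself: it is expressed as $Y_a\,\pfgeq(\psi')$ — reading backwards, the nearest previous $a$ lies at or after $j'$ — combined appropriately, so that $\pfgreat(\psi) = \pfgreat(\psi') \land Y_a\big(\pfgeq(\psi')\big)$ (modulo taking care that the witnessing $a$ is the \emph{first} one after $j'$, which the $Y_a$ semantics — nearest previous $a$ — secures). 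Dually, $\pfleq(\psi)$ is obtained by negation, $\pfless(\psi) = \pfleq(\psi')\lor\big(\text{at }j'\text{ exactly, and no }a\text{ yet}\big)$, and the unit cases $\xunit,\yunit$ are immediate shifts: $\pfgreat(\psi'\,\xunit) = \pfgeq(\psi')$, $\pfgeq(\psi'\,\xunit)$ likewise after one step back, etc. The past-directed operators $Y_a,\weaky{a},\yunit$ are symmetric, swapping $<$ with $>$ and $X$ with $Y$.

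The main obstacle I anticipate is getting the \emph{boundary cases} exactly right — distinguishing strict from non-strict comparisons at the ranker position itself, and correctly handling the weak operators $\weakx{a},\weaky{a}$ whose scan may stay put at the current position. These are precisely the situations where an off-by-one in the $\xunit/\yunit$ shifts, or a confusion between $\pfgeq$ and $\pfgreat$ in the recursive call, produces a formula that is wrong only at one position of one word — so the verification of each inductive case against the semantics of $X_a,Y_a,\ldots$ needs to be done carefully. The linear size bound is then immediate: each inductive step adds a bounded number of operators (one temporal modality plus a constant-size boolean combination of directionality formulae for the immediate subranker $\psi'$), and since a ranker of size $n$ has $n$ nested operators, the resulting directionality formula has size $O(n)$.
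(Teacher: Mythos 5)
Your overall strategy is the same as the paper's: induction that peels off the last operator of the ranker, mutual definition of the four directionality formulae, and use of the nearest-letter modalities ($X_a$, $Y_a$) to compare the current position with the prefix ranker's position (the paper itself only tabulates the construction and cites \cite{DKL10,PS14,SShah12} for the correctness proof). However, two concrete points in your proposal fail. First, the clause you give for the strict case $\psi=\psi'X_a\top$, namely $\pfgreat(\psi)=\pfgreat(\psi')\land Y_a\,\pfgeq(\psi')$, is the clause for the \emph{weak} operator $\weakx{a}$, not for $X_a$: if $w[\lpos(\psi')]=a$, then the first $a$ strictly after $j'=\lpos(\psi')$ sits at some later position $j$, and at any $i$ with $j'<i\le j$ the nearest previous $a$ is $j'$ itself, so your formula holds even though $i>\lpos(\psi)$ is false. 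The correct clause is $Y_a\,\pfgreat(\psi')$ (the nearest previous $a$ must lie \emph{strictly} after $j'$), after which your extra conjunct $\pfgreat(\psi')$ is redundant. You explicitly anticipated this strict/non-strict pitfall, but the case you actually wrote down is wrong, and getting these cases right is the whole content of the lemma --- this is why the paper's table keeps $X_a$ versus $\weakx{a}$ (and $Y_a$ versus $\weaky{a}$) as separate rows. A smaller omission: rankers may also contain $EP$, so both reset cases $SP$ and $EP$ need base rows.

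Second, your argument for the linear size bound does not go through as stated. A clause that uses \emph{two} directionality formulae of the immediate subranker (as $\pfgreat(\psi')\land Y_a\,\pfgeq(\psi')$ does) gives a recurrence of the shape $s(n)\le 2\,s(n-1)+O(1)$, i.e.\ exponential tree size; ``a constant-size boolean combination of directionality formulae for the immediate subranker'' is therefore not enough. Linearity needs the stronger invariant that each of the four formulae for $\psi$, after unfolding the same-level cross-references, contains at most one occurrence of a directionality formula of the prefix ranker; the paper's table is engineered to have exactly this property (for instance $\pfgreat(\phi X_a\top)=Y_a\,\pfgreat(\phi\top)$, and $\pfgeq(\phi X_a\top)=\gabar{a}\lor X_a\,\pfgreat(\phi X_a\top)$ unfolds to a single prefix occurrence). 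Once you replace your $X_a$ clause by $Y_a\,\pfgreat(\psi')$ and drop the redundant conjunct, your construction regains this single-occurrence shape and the linear bound follows; with the clause as you wrote it, the claimed ``immediate'' size bound is simply not justified.
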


We give the construction of these formulae below and we omit the proof of above lemma which can be found in
\cite{DKL10,PS14,SShah12}.
\\
\begin{tabular}{|c|c|c|c|c|}
\hline
$\psi$ & $\pfless(\psi)$ & $\pfleq(\psi)$ & $\pfgreat(\psi)$ & $\pfgeq(\psi)$\\
\hline
\hline
$\phi SP\top$& $\bot$ & $\mathit{Atfirst}$ & $\neg\mathit{Atfirst}$ & $\top$\\
\hline
$\phi EP\top$& $\neg\mathit{Atlast}$ & $\top$ & $\bot$ & $\mathit{Atlast}$\\
\hline
$\phi\weakx{a}\top$ & $X_a(\pfleq(\psi))$ & $\habar{a}\lor (Y_a\pfless(\phi\top))$ & $Y_a\pfgeq(\phi\top)$ & $\gabar{a}\lor X_a\pfgreat(\psi)$\\
\hline

$\phi X_{a}\top$ & $X_a(\pfleq(\psi))$ & $\habar{a}\lor (Y_a\pfleq(\phi\top))$ & $Y_a\pfgreat(\phi\top)$ & $\gabar{a}\lor X_a\pfgreat(\psi)$\\
\hline
$\phi\weaky{a}\top$ & $X_a\pfleq(\phi\top)$ & $\habar{a}\lor (Y_a\pfless(\psi))$ & $Y_a\pfgeq(\psi)$ & $\gabar{a}\lor X_a\pfgreat(\phi\top)$\\
\hline
$\phi Y_a\top$ & $X_a\pfless(\phi\top)$ & $\habar{a}\lor (Y_a\pfless(\psi))$ & $Y_a\pfgeq(\psi)$ & $\gabar{a}\lor X_a\pfgeq(\phi\top)$\\
\hline
$\phi\xunit\top$ & $\pfleq(\phi\top)$ & $\mathit{Atfirst} ~\lor~ \yunit\pfleq(\phi\top)$ & $\yunit \pfgreat(\phi\top)$ & $\pfgreat(\phi\top)$ \\
\hline
$\phi\yunit\top$ & $\xunit\pfless(\phi\top)$ & $\pfless(\phi\top)$ & $\pfgeq(\phi\top)$ & $\mathit{Atlast} ~\lor~$ \\
  & & & & $\xunit\pfgeq(\phi\top)$ \\
\hline 
\end{tabular}

\subsubsection{Reducing $\uitlpm$ to $\tlxy$} Logic
\uitlpm\/ is a deterministic logic and the Unique Parsing property holds for its subformulas. 
Hence, for every \uitlpm\/ subformula $\psi$, and any word $w$, there is 
a unique interval $\intv(\psi)$ within which $\psi$ needs to be evaluated. 
Further, if subformula $\psi$ has as its major connective  a ``chop'' operator ($\firsta,\lasta,\firstp{a},\lastm{a},\succr,\predr,\succp,\predm$), then there is a unique chop position $\cpos(\psi)$. If such an interval or chop position fails to exist in the word, then we return $\bot$. The $\intv(\psi)$ and $\cpos(\psi)$ for any subformula $\psi$ depend on its context and rankers characterizing these positions  can be inductively defined as follows.

For every \uitlpm\/ subformula $\psi$ of $\phi$, we define rankers $\mathit{LIntv}(\psi)$ and $\mathit{RIntv}(\psi)$, such that Lemma \ref{lem:uitlpm} holds. $\mathit{LIntv}(\psi)$ and $\mathit{RIntv}(\psi)$ are rankers whose characteristic positions define end points of $\intv(\psi)$ respectively.

\begin{definition}[Composition]
 Let $RK$ be a ranker and $\phi$ be a formula of $\tlxy$. Then, $RK;\phi$ denotes formula $RK[\top/\phi]$. For example,
 $EP (Y_a X_1 \top) ; \phi ~~=~~EP (Y_a X_1 \phi)$.
\end{definition}
Note that if $\phi$ is a ranker then $RK;\phi$ is also a ranker.

\begin{lemma}
\label{lem:uitlpm}
Given a \uitlpm\/ subterm $\psi$ of a formula $\phi$, and any $w\in\alphbt^+$ such that $\intv(\psi),\cpos(\psi)\neq\bot$, 
\begin{itemize}
 \item $\lpos(\mathit{LIntv}(\psi)) = l(\intv(\psi))$
 \item $\lpos(\mathit{RIntv}(\psi)) = r(\intv(\psi))$
\end{itemize}
\end{lemma}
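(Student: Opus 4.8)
The plan is to prove Lemma~\ref{lem:uitlpm} by structural induction on the \uitlpm\/ subterm $\psi$, following exactly the same pattern used for the unique-parsing lemma (Lemma~\ref{lem:xymodelchecking}) in the logic of rankers. The key insight is that the rankers $\mathit{LIntv}(\psi)$ and $\mathit{RIntv}(\psi)$ are defined by induction on the context $\alpha[\ ]$ in which $\psi$ sits, so the statement to be proved is really a statement about contexts: if $\phi = \alpha[\psi]$ and the context $\alpha[\ ]$ is such that the interval $\intv(\psi)$ it induces is defined (not $\bot$), then the ranker scan associated with that context lands exactly on the left (resp.\ right) endpoint of $\intv(\psi)$. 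I would first spell out explicitly the inductive definitions of $\mathit{LIntv}$ and $\mathit{RIntv}$ that the lemma statement refers to — the excerpt announces them but does not display them — using the Composition operation $RK;\phi$ and the ranker-directionality formulae $\pfless,\pfleq,\pfgreat,\pfgeq$.

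The base case is the whole formula: $\intv(\phi) = [1,|w|]$, so $\mathit{LIntv}(\phi) = SP\top$ (ranker landing on position $1$) and $\mathit{RIntv}(\phi) = EP\top$, which by the semantics of $SP$ and $EP$ clearly give $\lpos = 1$ and $\lpos = |w|$ respectively. For the inductive step I would walk through each possible major connective of the enclosing formula and show how the endpoints of the child's interval relate to those of the parent. The boolean cases ($D_1 \lor D_2$, $\neg D$) are trivial: the interval is unchanged, so $\mathit{LIntv}$ and $\mathit{RIntv}$ pass through unmodified. The endpoint-shift cases ($\succr, \predr, \succp, \predm$) just adjust one endpoint by $\pm 1$, realised by composing the appropriate ranker with $\xunit$ or $\yunit$; here the side conditions $i<j$ or $j<|w|$ in the semantics are exactly what ensures the shifted position stays in $dom(w)$, so $\intv \neq \bot$ as hypothesised. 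The $SP\phi$ and $EP\phi$ cases collapse the interval to a point $[i,i]$ or $[j,j]$, so $\mathit{LIntv}$ and $\mathit{RIntv}$ of the child coincide with the appropriate endpoint ranker of the parent.

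The genuinely substantive cases are the four chop operators $\firsta, \lasta, \firstp{a}, \lastm{a}$. Take $D_1 \firsta D_2$ evaluated on $[i,j]$: the chop position $k$ is the \emph{first} occurrence of $a$ in $[i,j]$, the left child $D_1$ gets interval $[i,k]$ and the right child $D_2$ gets $[k,j]$. So $\mathit{RIntv}(D_1)$ and $\mathit{LIntv}(D_2)$ must both be the ranker for the chop position $\cpos$, which is obtained by taking the ranker for the \emph{left} endpoint $i$ of the parent's interval — i.e.\ $\mathit{LIntv}$ of the parent — and composing it with $\weakx{a}\top$ (search forward, non-strictly, for the first $a$): this lands exactly on $k$ because the parent interval has left endpoint $i$ and the ``$\forall m: i \le m < k, w[m]\neq a$'' clause in the semantics is precisely the defining property of $\weakx{a}$. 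One must also check that when $\intv(D_1 \firsta D_2) \neq \bot$ and the formula is satisfiable the scan does not overshoot past $j$ — but that is guaranteed because the semantics requires $k \le j$. The remaining three chop operators are symmetric: $\lasta$ uses $\weaky{a}$ composed with the \emph{right}-endpoint ranker of the parent, while $\firstp{a}$ and $\lastm{a}$ are the ``extending'' variants where the chop position may lie outside the parent interval, handled the same way but with the interval of one child becoming $[i,k]$ or $[k,i]$ with $k$ possibly $> j$ or $< i$.

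I expect the main obstacle to be bookkeeping rather than conceptual depth: one must be scrupulous about which endpoint of the parent interval seeds each ranker, and about the hypothesis $\intv(\psi),\cpos(\psi)\neq\bot$ — the lemma is stated conditionally, so I do not have to worry about failure propagation, but I do have to verify in each chop case that the $\weakx{a}$/$\weaky{a}$ scan actually succeeds (there \emph{is} an $a$ in the relevant range) precisely when the interval semantics says the chop position exists, so that the ranker's $\lpos$ is the claimed position and not $\bot$. The ranker-directionality formulae $\pfless$ etc.\ are not needed for this lemma itself (they enter the subsequent translation of \uitlpm\/ to \tlxy, to express which side of a chop position a subformula must look), so I would not invoke them here; the proof is a clean induction, and the one thing to state carefully is that $RK;\phi$ preserves rankerhood (noted just before the lemma) so that $\mathit{LIntv}$ and $\mathit{RIntv}$ are indeed well-defined rankers at every stage.
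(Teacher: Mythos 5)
Your proposal matches the paper's treatment: the paper merely displays the inductive definitions of $\mathit{LIntv}$ and $\mathit{RIntv}$ (base case $SP\top$/$EP\top$, pass-through for $SP$, $EP$ and the boolean connectives, composition with $\weakx{a}$/$\weaky{a}$ at the chops and with $\xunit$/$\yunit$ for $\succr,\predr,\succp,\predm$) and omits the detailed verification as ``apparent from the semantics'', citing Shah's thesis, so your structural induction on the context is exactly that omitted verification, with the same case analysis and the same use of the lemma's hypothesis $\intv(\psi),\cpos(\psi)\neq\bot$ to discharge scan failure. The only bookkeeping detail worth noting (and you flagged the need for such care yourself) is that for the extending chops the paper seeds the chop ranker from the opposite endpoint, namely $\mathit{RIntv}(\psi);\weakx{a}\top$ for $\firstp{a}$ and $\mathit{LIntv}(\psi);\weaky{a}\top$ for $\lastm{a}$.
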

The required formulas $\mathit{LIntv}(\psi), \mathit{RIntv}(\psi)$ may be constructed by induction on the depth of occurrence of the subformula $\psi$ as below. The correctness of these formulas  is apparent from the semantics of \uitlpm\/ formulas, and we omit the detailed proof (see \cite{SShah12}).
\begin{itemize}
 \item If $\psi=\phi$, then $\mathit{LIntv}(\psi) = SP \top$, $\mathit{RIntv}(\psi) = EP \top$
\item If $\psi=SP ~D_1$ then
$\mathit{LIntv}(D_1)=\mathit{RIntv}(D_1)=\mathit{LIntv}(\psi)$
\item If $\psi=EP ~D_1$ then
$\mathit{LIntv}(D_1)=\mathit{RIntv}(D_1)=\mathit{RIntv}(\psi)$
 \item If $\psi = D_1\firsta D_2$ then\\
 $\mathit{LIntv}(D_1) = \mathit{LIntv}(\psi) $, $\mathit{RIntv}(D_1) = \mathit{LIntv}(\psi)~;~ \weakx{a}\top$,\\
 $\mathit{LIntv}(D_2) = \mathit{LIntv}(\psi) ~;~ \weakx{a}\top$, $\mathit{RIntv}(D_2) = \mathit{RIntv}(\psi)$
\item If $\psi=D_1\firstp{a}D_2$ then\\
 $\mathit{LIntv}(D_1) = \mathit{LIntv}(\psi) $, $\mathit{RIntv}(D_1) = \mathit{RIntv}(\psi)~;~ \weakx{a}\top$,\\
 $\mathit{LIntv}(D_2) =\mathit{RIntv}(\psi) $, $\mathit{RIntv}(D_2) = \mathit{RIntv}(\psi) ~;~ \weakx{a}\top$
\item If $\psi=D_1\lasta D_2$ then\\
$\mathit{LIntv}(D_1) = \mathit{LIntv}(\psi) $, $\mathit{RIntv}(D_1) = \mathit{RIntv}(\psi)~;~ \weaky{a}\top$,\\
 $\mathit{LIntv}(D_2) = \mathit{RIntv}(\psi) ~;~ \weaky{a}\top$, $\mathit{RIntv}(D_2) = \mathit{RIntv}(\psi)$
\item If $\psi=D_1\lastm{a}D_2$ then\\
$\mathit{LIntv}(D_1) = \mathit{LIntv}(\psi) ~;~ \weaky{a}\top $, $\mathit{RIntv}(D_1) = \mathit{LIntv}(\psi)$,\\
 $\mathit{LIntv}(D_2) = \mathit{LIntv}(\psi) ~;~ \weaky{a}\top$, $\mathit{RIntv}(D_2) = \mathit{RIntv}(\psi)$
\item If $\psi = \succr D_1$ then 
$\mathit{LIntv}(D_1) = \mathit{LIntv}(\psi)~;~ \xunit\top$, $\mathit{RIntv}(D_1)=\mathit{RIntv}(\psi)$
\item If $\psi = \succp D_1$ then 
$\mathit{LIntv}(D_1) = \mathit{LIntv}(\psi)$, $\mathit{RIntv}(D_1)=\mathit{RIntv}(\psi) ~;~ \xunit\top$
\item If $\psi = \predr D_1$ then 
$\mathit{LIntv}(D_1) = \mathit{LIntv}(\psi)$, $\mathit{RIntv}(D_1)=\mathit{RIntv}(\psi) ~;~ \yunit\top$
\item If $\psi = \predm D_1$ then 
$\mathit{LIntv}(D_1) = \mathit{LIntv}(\psi) ~;~ \yunit\top$, $\mathit{RIntv}(D_1)=\mathit{RIntv}(\psi) $
\end{itemize}

Now we can give a model preserving transformation.
\begin{theorem} \label{theo:uitlpm}
Given any \uitlpm\/ formula $\phi$ of size $n$, we can construct in polynomial time a language-equivalent \tlxy\/ formula
$\mathit{Trans}(\phi)$, whose size is $O(n^2)$. Hence, satisfiability of 
\uitlpm\/ is NP-complete. 
\end{theorem}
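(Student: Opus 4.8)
The plan is to translate compositionally: to every \uitlpm\/ subterm $\psi$ of $\phi$ we assign a \tlxy\/ formula $\mathit{Tr}(\psi)$, always evaluated at the \emph{initial} position of the word, which records whether $\psi$ holds on its context-determined interval, and we set $\mathit{Trans}(\phi) := \mathit{Tr}(\phi)$. The invariant to maintain is, for every $w\in\alphbt^+$,
\[
w,1 \models \mathit{Tr}(\psi) \qquad\Longleftrightarrow\qquad \intv(\psi)\neq\bot \ \ \text{and}\ \ w,\intv(\psi)\models\psi ,
\]
so $\mathit{Tr}(\psi)$ is false whenever the context fails to pin down an interval; since $\intv(\phi)=[1,|w|]\neq\bot$ this yields $\mathcal{L}(\mathit{Trans}(\phi))=\mathcal{L}(\phi)$. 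Everything can live at position $1$ because of Lemma~\ref{lem:uitlpm}: the rankers $\mathit{LIntv}(\psi),\mathit{RIntv}(\psi)$ start at position $1$ and name the two endpoints of $\intv(\psi)$, so any position that $\mathit{Tr}(\psi)$ must inspect is reached by composing one of them with a single \tlxy\/ step, and any comparison of two such positions is expressible with the directionality formulas $\pfless,\pfleq,\pfgreat,\pfgeq$ of the Ranker Directionality Lemma. A ranker $RK$, viewed as a \tlxy\/ formula, satisfies $w,1\models RK$ precisely when its scan succeeds (i.e.\ $\lpos(RK)\neq\bot$), so $\mathit{Def}(\psi):=\mathit{LIntv}(\psi)\wedge\mathit{RIntv}(\psi)\wedge\big(\mathit{LIntv}(\psi)\,;\,\pfleq(\mathit{RIntv}(\psi))\big)$ is a \tlxy\/ formula true at $1$ exactly when $\intv(\psi)\neq\bot$.

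The definition of $\mathit{Tr}$ proceeds by structural recursion. The atomic and boolean cases are immediate once the $\mathit{Def}$ guard is in place: $\mathit{Tr}(\top)=\mathit{Def}(\psi)$; $\mathit{Tr}(D_1\vee D_2)=\mathit{Tr}(D_1)\vee\mathit{Tr}(D_2)$ (no guard, since $\vee$ is monotone); $\mathit{Tr}(\neg D)=\mathit{Def}(\psi)\wedge\neg\mathit{Tr}(D)$; $\mathit{Tr}(SP\,D_1)=\mathit{Def}(\psi)\wedge\mathit{Tr}(D_1)$, and similarly for $EP$; while $\mathit{Tr}(\pti),\mathit{Tr}(\mathit{unit}),\mathit{Tr}(a)$ compare the endpoint rankers with $\pfleq,\pfgeq$ after a suitable navigation (e.g.\ $\mathit{Tr}(\pti)$ asserts $\lpos(\mathit{LIntv}(\psi))=\lpos(\mathit{RIntv}(\psi))$, and $\mathit{Tr}(\mathit{unit})$ the same after one $\xunit$ step). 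The substantive cases are the eight chop operators; for $\psi=D_1\firsta D_2$, Lemma~\ref{lem:uitlpm} gives $\mathit{LIntv}(D_1)=\mathit{LIntv}(\psi)$, $\mathit{RIntv}(D_1)=\mathit{LIntv}(D_2)=\mathit{LIntv}(\psi)\,;\,\weakx{a}\top$, $\mathit{RIntv}(D_2)=\mathit{RIntv}(\psi)$, and we set
\[
\mathit{Tr}(\psi) \;=\; \mathit{Def}(\psi)\ \wedge\ \Big(\big(\mathit{LIntv}(\psi)\,;\,\weakx{a}\top\big)\,;\,\pfleq(\mathit{RIntv}(\psi))\Big)\ \wedge\ \mathit{Tr}(D_1)\ \wedge\ \mathit{Tr}(D_2),
\]
where the first conjunct says $\intv(\psi)\neq\bot$, the second says the first $a$ from $l(\intv(\psi))$ exists and lies at most at $r(\intv(\psi))$ --- equivalently $\cpos(\psi)\neq\bot$ and $\intv(D_1),\intv(D_2)\neq\bot$ --- and the last two are the induction hypotheses on the two halves. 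The remaining chop operators $\lasta,\firstp{a},\lastm{a},\succr,\predr,\succp,\predm$ are handled in exactly the same shape, reading $\mathit{LIntv}/\mathit{RIntv}$ off the clauses following Lemma~\ref{lem:uitlpm} and substituting $\weaky{a}$, $\xunit$, or $\yunit$ for $\weakx{a}$, with $\pfleq$ or $\pfgeq$ chosen according to the direction of the chop.

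Correctness is a routine structural induction on $\psi$. Whenever $\mathit{Tr}(\psi)$ can be true, its guard conjuncts force every ranker occurring in it to succeed before any directionality formula built from that ranker is used; hence the hypothesis ``$\lpos(\cdot)\neq\bot$'' of the Ranker Directionality Lemma is met, and under it the defining clause of $\mathit{Tr}(\psi)$ is precisely the semantic clause of the corresponding \uitlpm\/ operator rewritten in terms of the endpoint and chop positions, so the induction goes through by Lemma~\ref{lem:uitlpm}. Conversely, when $\intv(\psi)=\bot$ the guard $\mathit{Def}(\psi)$ (or, for $\vee$, the falsity of both subtranslations by hypothesis) makes $\mathit{Tr}(\psi)$ false. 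The one genuinely delicate point --- and the main obstacle --- is exactly this $\bot$-bookkeeping: the rankers $\mathit{LIntv}(\psi)$ and $\mathit{RIntv}(\psi)$ can each be defined yet be out of order, so $\intv(\psi)=\bot$ need not follow from the failure of any single ranker; one must therefore attach the $\mathit{Def}$ guard to every non-monotone clause and ensure each directionality comparison is downstream of a conjunct that guarantees the ranker it refers to is a real position.

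For the size bound, $\mathit{LIntv}(\psi)$ and $\mathit{RIntv}(\psi)$ are built from $SP\top$ (resp.\ $EP\top$) by appending one \tlxy\/ modality for each level at which $\psi$ is nested in $\phi$, so each has size $O(n)$ where $n$ is the size of $\phi$; by the Ranker Directionality Lemma the wrappers $\pfless,\pfleq,\pfgreat,\pfgeq$ are linear in their argument, hence $O(n)$ as well; and each clause of $\mathit{Tr}$ uses only a constant number of such $O(n)$-sized gadgets \emph{in conjunction with} --- not substituted inside --- the recursive translations $\mathit{Tr}(D_1),\mathit{Tr}(D_2)$. Summing the $O(n)$ local contribution over the $O(n)$ subterms of $\phi$ gives $|\mathit{Trans}(\phi)|=O(n^2)$, and the construction is plainly polynomial-time. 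Since $\phi$ is satisfiable iff $\mathit{Trans}(\phi)$ is, running the NP satisfiability procedure for \tlxy\/ (established above) on this polynomially larger formula puts \uitlpm-satisfiability in NP; NP-hardness holds for the same reasons as for \tlxy, since \uitlpm\/ subsumes a propositionally complete fragment --- e.g.\ a word's letters can be forced to encode an arbitrary truth assignment that the formula reads off positionally with $\succr$ and the atomic tests.
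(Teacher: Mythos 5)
Your proposal takes essentially the same route as the paper: translate each \uitlpm\/ subformula using the interval rankers $\mathit{LIntv},\mathit{RIntv}$ of Lemma~\ref{lem:uitlpm} together with the ranker directionality formulas $\pfleq,\pfgeq$, attach the recursive translations of $D_1,D_2$ as top-level conjuncts rather than substituting them inside rankers, and obtain the $O(n^2)$ bound by summing $O(n)$-sized gadgets over the $O(n)$ subterms, with NP-completeness then following from the \tlxy\/ result. The only difference is presentational: you add explicit $\mathit{Def}(\psi)$ guards to make the $\bot$ (undefined interval) bookkeeping visible, which the paper's clauses leave implicit (deferring the correctness details to \cite{SShah12}); this does not change the argument.
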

\begin{proof}
For any subformula $\psi$ of $\phi$, we construct a corresponding \tlxy\/ formula $\mathit{Trans}(\psi)$. 
The conversion uses the following inductive rules.  Then, it is easy to see that $\mathit{Trans}(\psi)$ is language equivalent to $\phi$
(see \cite{SShah12} for proof). 
\begin{itemize}
\item If $\psi= SP ~D_1$ or $EP ~D_1$ then $\mathit{Trans}(\psi) = LIntv(D_1)~;~\mathit{Trans}(D_1)$
 \item If $\psi = D_1\firsta D_2$, then $\mathit{Trans}(\psi)= [(~\mathit{LIntv}(\psi);\weakx{a}\top~)~;~\pfleq(\mathit{RIntv}(\psi))] \land \mathit{Trans}(D_1)\land \mathit{Trans}(D_2)$
 \item If $\psi = D_1\lasta D_2$, then $\mathit{Trans}(\psi)= [(~\mathit{RIntv}(\psi);\weaky{a}\top~)~;~\pfgeq(\mathit{LIntv}(\psi))] \land \mathit{Trans}(D_1)\land \mathit{Trans}(D_2)$
 \item If $\psi = D_1\firstp{a} D_2$, then $\mathit{Trans}(\psi)= [(~\mathit{LIntv}(\psi);\weakx{a}\top~)~;~\pfgeq(\mathit{RIntv}(\psi))] \land \mathit{Trans}(D_1)\land \mathit{Trans}(D_2)$
 \item If $\psi = D_1\lastm{a} D_2$, then $\mathit{Trans}(\psi)= [(~\mathit{RIntv}(\psi);\weaky{a}\top~)~;~\pfleq(\mathit{LIntv}(\psi))] \land \mathit{Trans}(D_1)\land \mathit{Trans}(D_2)$
\item If $\psi= \succr D_1$, then $\mathit{Trans}(\psi)=[(\mathit{LIntv}(\psi);\xunit\top) ~;~ \pfleq(\mathit{RIntv}(\psi))] ~\land~ \mathit{Trans}(D_1)$
\item If $\psi= \predr D_1$, then $\mathit{Trans}(\psi)=[(\mathit{RIntv}(\psi);\yunit\top) ~;~ \pfgeq(\mathit{LIntv}(\psi))] ~\land~ \mathit{Trans}(D_1)$
\item If $\psi= \succp D_1$, then $\mathit{Trans}(\psi)=[(\mathit{RIntv}(\psi);\xunit\top)] ~\land~ \mathit{Trans}(D_1)$
\item If $\psi= \predm D_1$, then $\mathit{Trans}(\psi)=[(\mathit{LIntv}(\psi);\yunit\top)] ~\land~ \mathit{Trans}(D_1)$
 \item $\mathit{Trans}(D_1\lor D_2) = \mathit{Trans}(D_1)\lor \mathit{Trans}(D_2)$
 \item $\mathit{Trans}(\neg D_1) = \neg \mathit{Trans}(D_1)$
\end{itemize}
\end{proof}
\subsection{AtNext Logic}

Logic $\tlxy$ exactly characterizes the language class $UL$. The previous section shows that several deterministic logics can be translated
to $\tlxy$ in polynomial time using the rankers and ranker directionality. Thus, there is robust connection between $UL$, deterministic modalities and efficient NP-complete satisfiability. 

In this section, we consider a recursive (hierarchical) extension of $\tlxy$ which is deterministic but much more expressive, 
Recursive Temporal Logic (\rectl\/) with the \emph{recursive} and \emph{deterministically guarded} \emph{Next} and \emph{Prev} modalities. 
The logic \rectl\/ was defined by Kr\"oger \cite{Kro}, with ``at-next'' and ``at-prev'' modalities and shown to be expressively
equivalent to $\ltl$.
\begin{definition}[Syntax]
$\phi := \top ~\mid~  a ~\mid~ X_\phi \phi~\mid~ Y_\phi \phi~\mid~ \phi\lor\phi ~\mid~ \neg \phi$
\end{definition}
When interpreted over a word $w$ and at a position $i$ in $w$, 
the semantics of the X and Y operators is given by:
\begin{itemize}
\item $w,i\models X_\phi\psi$ iff $\exists j>i~.~ w,j\models\phi\land\psi$ and $\forall i<k<j ~.~ w,k\not\models \phi$
\item $w,i\models Y_\phi\psi$ iff $\exists j<i~.~ w,j\models\phi\land\psi$ and $\forall j<k<i ~.~ w,k\not\models \phi$
\end{itemize}
Given a \rectl\/ formula $\phi$, we may define the recursion depth $rd(\phi)$ using the following rules:
\begin{itemize}
\item If $\phi = a$ or $\phi=\top$, $rd(\phi)=0$.
\item If $\phi = \phi_1\lor\phi_2$, $rd(\phi) = max(rd(\phi_1),rd(\phi_2))$
\item If $\phi= \neg\phi_1$, $rd(\phi)=rd(\phi_1)$
\item If $\phi = X_\zeta\psi$ or $\phi = Y_\zeta\psi$, then  $rd(\phi) ~=~ max(rd(\zeta)+1, rd(\psi))$
\end{itemize}
We denote by $\rectl^k$ formulae with maximum recursion depth $k$. Note that there is no restriction on nesting depth of modalities.
It is clear that $\rectl^1 ~=~ \tlxy$.

\begin{example}
\label{exam:rectl}
Consider the \tlrecr\/ formula $\phi =X_{\psi_1}Y_{\psi_2}\top$ where $\psi_1= a\land Y_b\top\land X_c\top$ and $\psi_2= X_c \habar{b}$. When we evaluate $\phi$ over the word $w= ccaccbccabbcacc$, $\pos(\phi)=1$. The first position in the word where $\psi_1$ holds is 9 hence $\pos(Y_{\psi_2}\top)=9$. Finally, the last position before 9 where $\psi_2$ holds is 4. Hence $w\in\mathcal L(\phi)$.
\end{example}

A closer look at the semantics of \rectl\/ and \ltl\/ allows us to see that the deterministic until and since modalities are in fact not very different from the \emph{until} ($\until$) and \emph{since} ($\since$) modalities of LTL. Translations between them may be achieved using translation functions $\alpha$ and $\beta$ as described below. 
\begin{lemma} 
$TL[\until,\since]^k ~\leq~ \rectl^k$. Hence, $\rectl ~\equiv~ \ltl~ \equiv FO[<]$.
\end{lemma}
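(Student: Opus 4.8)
The plan is to exhibit two translations between $TL[\until,\since]$ and \rectl\/ that commute with the boolean connectives, and to bound the recursion depth of the forward one. For the forward direction I would define $\alpha \colon TL[\until,\since] \to \rectl$ by $\alpha(a)=a$, $\alpha(\top)=\top$, letting $\alpha$ commute with $\neg$ and $\lor$, and putting $\alpha(\phi\until\psi) = X_{\neg\alpha(\phi)\lor\alpha(\psi)}\,\alpha(\psi)$ and $\alpha(\phi\since\psi) = Y_{\neg\alpha(\phi)\lor\alpha(\psi)}\,\alpha(\psi)$. The design is dictated by a ``first-crossing'' reformulation of strict until that I would establish first, directly from the semantics over a fixed word $w$: for every $i$, $w,i\models\phi\until\psi$ iff the least position $j>i$ with $w,j\models\neg\phi\lor\psi$ exists and satisfies $w,j\models\psi$ (symmetrically to the left for \since). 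The point is that the right-hand side is literally the semantics of $X_{\neg\phi\lor\psi}\,\psi$.

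Given this reformulation, semantic correctness of $\alpha$, i.e.\ $w,i\models\chi \fif w,i\models\alpha(\chi)$ for all $w,i$, follows by a routine structural induction: in the case $\chi=\phi\until\psi$ one invokes the induction hypothesis to replace $\phi,\psi$ by $\alpha(\phi),\alpha(\psi)$ inside the reformulation and then reads off the semantics of $\alpha(\chi)$; the $\since$ case is symmetric and the boolean cases are immediate. For the depth bound I would extend the paper's $rd$ to $\until,\since$ in the natural way, $rd(\phi\until\psi)=rd(\phi\since\psi)=\max(rd(\phi),rd(\psi))+1$, and check by induction that $rd(\alpha(\chi))\leq rd(\chi)$. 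The only thing to verify is that the extra ``$+1$'' charged by \rectl\/ for placing $\neg\alpha(\phi)\lor\alpha(\psi)$ in \emph{guard} position of an $X$ (so that $rd(X_\zeta\chi)=\max(rd(\zeta)+1,rd(\chi))$) is absorbed by the ``$+1$'' that \until\/ already contributes. This yields $TL[\until,\since]^k\leq\rectl^k$ for every $k$.

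For the converse, and hence the equivalence, I would define $\beta\colon\rectl\to TL[\until,\since]$, again commuting with the booleans, with $\beta(X_\zeta\psi)=(\neg\beta(\zeta))\until(\beta(\zeta)\land\beta(\psi))$ and $\beta(Y_\zeta\psi)=(\neg\beta(\zeta))\since(\beta(\zeta)\land\beta(\psi))$; its correctness is immediate by unfolding the defining clauses of the $X_\zeta$ and $Y_\zeta$ modalities. Thus $\rectl\leq\ltl$, while the forward translation (taken over all $k$) gives $\ltl\leq\rectl$, so $\rectl\equiv\ltl$; invoking Kamp's theorem $\ltl\equiv\foless$, cited in the introduction, completes the chain $\rectl\equiv\ltl\equiv\foless$.

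The only step with genuine content is the first-crossing reformulation, specifically its forward direction: one takes a witness $m>i$ for $\phi\until\psi$ and the least $j>i$ with $w,j\models\neg\phi\lor\psi$, notes $j\leq m$ because $w,m\models\psi$ forces $\neg\phi\lor\psi$ at $m$, and then argues $w,j\models\psi$ by cases --- if $j=m$ this is the witnessing property, and if $j<m$ then $w,j\models\phi$ (all positions strictly between $i$ and $m$ satisfy $\phi$), so $\neg\phi$ fails at $j$ and $\psi$ must be the disjunct that holds. The reverse direction of the reformulation, the boolean induction steps, and the depth arithmetic are all straightforward bookkeeping.
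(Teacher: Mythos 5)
Your proposal is correct and follows essentially the same route as the paper: the same translations $\alpha(\phi\until\psi)=X_{\neg\alpha(\phi)\lor\alpha(\psi)}\,\alpha(\psi)$ and $\beta(X_\zeta\psi)=(\neg\beta(\zeta))\until(\beta(\zeta)\land\beta(\psi))$ (with symmetric clauses for $\since$/$Y$), proved correct by structural induction, with the recursion-depth bookkeeping and Kamp's theorem supplying the chain $\rectl\equiv\ltl\equiv\foless$. You merely spell out the ``first-crossing'' equivalence and the depth arithmetic that the paper leaves as an easy induction.
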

\begin{proof}
Let the translation functions which preserve boolean operations be defined as follows.
\begin{itemize}
\item $\alpha(\phi \until \psi) ~~\equiv~~ X_
                 {\alpha[(\neg \phi) \lor \psi]} ~~\alpha(\psi)$
\item $\beta(X_\phi \psi) ~~\equiv~~ 
                 [\beta(\neg\phi)] \until [\beta(\phi\land\psi)]$
\end{itemize}
The \emph{since} modalities may be translated in a similar manner. Then, it is easy to show by induction on the
depth of formulae that
\begin{itemize}
 \item for any \ltl\/ formula,
$w,i \models \phi ~~\fif~~ w,i \models \alpha(\phi)$.
 \item for any $\rectl$ formula $\psi$,
$w,i \models \psi ~~ \fif ~~w,i \models \beta(\psi)$.
\end{itemize}
 Note that for $\phi \in \ltl^k$  we have $\alpha(\phi) \in \rectl^k$.
 Also, note that it is straightforward to translate $\rectl$ formulae into $FO[<]$ formulae with one free variable $x$.
\end{proof}
\\
We remark here that Simoni Shah has recently come up with a  form of alternating automata called \rpotdfa\/ such that
$\rectl^k$ exactly corresponds to $\rpotdfa^k$. Thus, there is a clean automaton characterization for the AtNext hierarchy.
We also remark that languages $Stair_k= \alphbt^*(a c^*)^ka\alphbt^*$ 
defined by Etessami and Wilke \cite{EW} are specified by
$\rectl^2$ formula $X_\psi$ where $\psi = \neg X_{a \lor c} a \land X_{a \lor c} a \land \ldots X_{a \lor c} a$ with $k$ occurrences of $X$. See \cite{PS15} for details.

Finally, the following theorem relates the At-Next Hierarchy to the Quantifier-Alternation Hierarchy of Thomas \cite{Tho}.
\begin{theorem}[Borchert and Tesson \cite{BT04,PS15}]
$\rectl^k ~\subseteq~ \Delta_{k+1}[<]$.
\end{theorem}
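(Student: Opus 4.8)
The plan is to translate \rectl\ faithfully into \foless\ while keeping quantifier alternation under control. Concretely, I would prove by structural induction on a \rectl\ formula $\phi$, writing $k := rd(\phi)$, that $\phi$ has two equivalent \foless\ formulas with one free variable $x$ (meaning $w,i\models\phi$ iff $w\models\phi^\Sigma(i)$ iff $w\models\phi^\Pi(i)$ for all finite $w$ and $i\in dom(w)$), namely $\phi^\Sigma\in\Sigma_{k+1}[<]$ and $\phi^\Pi\in\Pi_{k+1}[<]$. Carrying \emph{both} forms is forced by negation, via $(\neg\phi_1)^\Sigma := \neg(\phi_1^\Pi)$ and $(\neg\phi_1)^\Pi := \neg(\phi_1^\Sigma)$ (using $\neg\Sigma_n=\Pi_n$), and by the asymmetric modal clause below; for disjunction one takes $(\phi_1\lor\phi_2)^\Sigma := \phi_1^\Sigma\lor\phi_2^\Sigma$ and dually, using closure of each level under $\land$ and $\lor$. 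The base case ($k=0$, $\phi$ a Boolean combination of $\top$ and letters) is just the obvious quantifier-free translation, which lies in $\Sigma_1[<]\cap\Pi_1[<]$. Granting the induction, $\mathcal L(\phi)=\{w:w,1\models\phi\}$ is defined both by the $\Sigma_{k+1}[<]$ sentence $\phi^\Sigma(\min)$ and by the $\Pi_{k+1}[<]$ sentence $\phi^\Pi(\min)$ --- substituting the first-position constant $\min$, available in \foless\ over finite words at no alternation cost --- so $\mathcal L(\phi)\in\deltaless{k+1}$.

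The heart is the modal case $\phi = X_\zeta\psi$ (the clause for $Y_\zeta\psi$ is its left--right mirror). Here $k=\max(rd(\zeta){+}1,\,rd(\psi))$, so the guard satisfies $rd(\zeta)\le k-1$ and by induction has translations living in $\Sigma_k[<]\cap\Pi_k[<]$; this is the decisive gain from the guard having strictly smaller recursion depth, and it matters because $\zeta$ occurs both positively and negatively below, whereas $\psi$ only yields $\psi^\Sigma\in\Sigma_{k+1}[<]$ and $\psi^\Pi\in\Pi_{k+1}[<]$. For the $\Sigma$-side use the direct rendering
\[
\phi^\Sigma(x) \;\equiv\; \exists y\bigl(\, y>x \,\land\, \zeta^\Sigma(y) \,\land\, \psi^\Sigma(y) \,\land\, \forall z\,(x<z<y \to \neg\zeta^\Sigma(z)) \,\bigr);
\]
under the $\exists y$ the conjuncts are quantifier-free, $\Sigma_k$, $\Sigma_{k+1}$, and $\Pi_k$ (a universal over $\neg\zeta^\Sigma\in\Pi_k$) respectively, so the body is $\Sigma_{k+1}$ and the leading $\exists y$ absorbs into it. For the $\Pi$-side exploit determinism: since the witness is the \emph{unique} first $\zeta$-position after $x$,
\[
\phi^\Pi(x) \;\equiv\; \exists y\,(y>x \land \zeta^\Sigma(y)) \;\land\; \forall y\,\bigl(\chi(x,y) \to \psi^\Pi(y)\bigr),
\]
where $\chi(x,y)$ abbreviates ``$y$ is the first $\zeta$-position after $x$''. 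Negating the antecedent rewrites $\chi(x,y)\to\psi^\Pi(y)$ as a disjunction of a quantifier-free formula, $\neg\zeta^\Sigma(y)\in\Pi_k$, $\exists z\,(x<z<y\land\zeta^\Sigma(z))\in\Sigma_k$, and $\psi^\Pi(y)\in\Pi_{k+1}$; each disjunct lies in $\Pi_{k+1}$, so the body is $\Pi_{k+1}$ and the outer $\forall y$ absorbs into it, while the first conjunct is $\Sigma_k\subseteq\Pi_{k+1}$. The essential point is that the ``expensive'' translation $\psi^\Pi$ here sits beneath a universal quantifier; were $X$ a nondeterministic \fut-style modality no such dual rendering would exist and the level would not be preserved.

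I expect the main obstacle to be precisely this alternation bookkeeping: one must (i) stay disciplined about inserting each occurrence of the guard $\zeta$ in whichever of its equivalent forms ($\zeta^\Sigma\in\Sigma_k$, $\zeta^\Pi\in\Pi_k$, or their negations) keeps the enclosing formula at level $k+1$; and (ii) recognize that the determinism of $X_\phi,Y_\phi$ is exactly what licenses the dual universal rendering of the $\Pi_{k+1}$ translation, so that $\psi$'s $\Pi_{k+1}$ translation does not push $X_\zeta\psi$ up to $\Sigma_{k+2}[<]$. Everything else is routine: the manipulations $\exists x\,\Sigma_n=\Sigma_n$, $\forall x\,\Pi_n=\Pi_n$, $\exists x\,\Pi_n\subseteq\Sigma_{n+1}$, $\forall x\,\Sigma_n\subseteq\Pi_{n+1}$, the inclusions $\Sigma_n,\Pi_n\subseteq\Sigma_{n+1}\cap\Pi_{n+1}$, closure of each level under $\land$ and $\lor$, the harmless absorption of bounded quantifiers, and the treatment of the endpoint constants. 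As a consistency check the bound specializes correctly at both ends: $\rectl^1=\tlxy\equiv UL=\deltaless{2}$, and $\rectl\equiv\ltl\equiv\foless=\bigcup_k\deltaless{k}$.
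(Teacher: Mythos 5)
Your proposal is correct and follows essentially the route the paper relies on: the paper omits the proof but points to explicit inductive translations of $\rectl^k$ into both $\Sigma_{k+1}[<]$ and $\Pi_{k+1}[<]$ (citing \cite{BT04,PS15}), which is exactly what you construct, with the key points — the guard's recursion depth being strictly smaller and the determinism of $X_\zeta,Y_\zeta$ licensing the dual universal rendering — correctly identified. The alternation bookkeeping in your modal clauses checks out, so no gap to report.
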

We do not give a proof of this here. See \cite{BT04} for a proof outline. 
Explicit translations from $\rectl^k$ to formulae of $\Sigma_{k+1}[<]$ as well as $\Pi_{k+1}[<]$ are given in \cite{PS15}.

Now we consider a subset of $\rectl$ called \tlrecr.
\begin{definition}[Syntax]
$\psi ~:=~ a ~\mid~ \phi ~\mid~ \psi\lor\psi ~\mid~ \neg\psi$, 
where $a \in\alphbt$ and

\hspace{3cm}$\phi ~:= \top ~\mid~  SP\phi ~\mid~ EP\phi ~\mid~ X_\psi \phi ~\mid~ Y_\psi \phi$ 
\end{definition}
The formula in example \ref{exam:rectl}  is actually a formula of $\tlrecr$.

In the above syntax the $\phi$ formulae are called the 
\emph{recursive rankers} of \tlrecr. 
The main restriction is that rankers cannot use boolean operators 
or the atomic proposition $a$ except through recursive subformulae.  
The recursive rankers satisfy an important property
of convexity as stated below. 
\begin{lemma}[Convexity \cite{PS14}]
\label{lem:convex2}
For any recursive ranker formula $\phi$, and any word $w\in\alphbt^+$, if there exist $i,j\in dom(w)$ such 
that $i<j$ and $w,i\models\phi$ and $w,j\models\phi$, then $\forall i<k<j$, we have $w,k\models\phi$. 
\end{lemma}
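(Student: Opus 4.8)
The plan is to prove the statement by structural induction on the recursive ranker $\phi$, following its grammar $\phi ::= \top \mid SP\,\phi_1 \mid EP\,\phi_1 \mid X_\psi\phi_1 \mid Y_\psi\phi_1$. For a word $w$ and a formula $\gamma$ write $S_w(\gamma) = \{\,i\in dom(w) : w,i\models\gamma\,\}$; the lemma is exactly the assertion that $S_w(\phi)$ is an interval of $dom(w)$ for every recursive ranker $\phi$. The base case $\phi=\top$ is immediate, since $S_w(\top)=dom(w)$. For $\phi=SP\,\phi_1$ and $\phi=EP\,\phi_1$ one reads off from the semantics that the truth of $\phi$ at a position $i$ does not really depend on $i$ (the modality relocates the evaluation to a fixed endpoint of $w$), so $S_w(\phi)$ is $\emptyset$, all of $dom(w)$, or $dom(w)$ with one boundary position removed --- an interval in each case, and no appeal to the induction hypothesis is needed. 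Everything of substance is in the case $\phi=X_\psi\phi_1$; the case $\phi=Y_\psi\phi_1$ is its mirror image.

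For $\phi=X_\psi\phi_1$, unfold the semantics: the requirements ``$\exists j>i$ with $w,j\models\psi$, and $\forall i<k<j.\ w,k\not\models\psi$'' pin $j$ down to $n(i) := \min\{\,j>i : w,j\models\psi\,\}$ whenever that set is nonempty, so $w,i\models X_\psi\phi_1$ iff $n(i)$ is defined and $w,n(i)\models\phi_1$. The guard $\psi$ is not a recursive ranker --- it may contain booleans and atomic propositions --- so we have, and will need, no convexity information about the set of positions where $\psi$ holds; we use only that $n$ is monotone, i.e.\ $i\le i' \Rightarrow n(i)\le n(i')$ wherever the right-hand side is defined (because $\{\,j>i':w,j\models\psi\,\}\subseteq\{\,j>i:w,j\models\psi\,\}$, and the minimum of a subset is at least that of the superset), and that definedness of $n(i')$ for some $i'>i$ forces definedness of $n(i)$. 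Now suppose $i_1<i_2$ with $w,i_1\models X_\psi\phi_1$ and $w,i_2\models X_\psi\phi_1$, and pick any $m$ with $i_1<m<i_2$. Setting $n_1=n(i_1)$ and $n_2=n(i_2)$, both are defined with $w,n_1\models\phi_1$ and $w,n_2\models\phi_1$; by the two facts just noted $n(m)$ is defined and $n_1\le n(m)\le n_2$. If $n_1=n_2$ then $n(m)=n_1$ and $w,n(m)\models\phi_1$ directly; if $n_1<n_2$, apply the induction hypothesis to the strictly smaller recursive ranker $\phi_1$: $S_w(\phi_1)$ is an interval containing $n_1$ and $n_2$, hence contains $n(m)$, so again $w,n(m)\models\phi_1$. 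Either way $w,m\models X_\psi\phi_1$, which is the required convexity.

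The one idea to get right --- and the temptation to overcomplicate here is real --- is that the guard $\psi$ contributes nothing but a set of ``landmark'' positions, that the next-landmark function $n$ is monotone no matter how irregular that set is, and that convexity therefore has to be propagated only through the inner recursive ranker $\phi_1$, for which the induction hypothesis is precisely what is available; there is no induction on $\psi$ and none is possible, since boolean combinations of recursive rankers need not be convex. Once this is recognised each case collapses to a couple of lines, the only routine care being the bookkeeping of when $n(m)$ is defined, which is handed to us by $w,i_2\models X_\psi\phi_1$. The $Y_\psi\phi_1$ case runs identically with the previous-landmark function $p(i) := \max\{\,j<i : w,j\models\psi\,\}$, which is likewise monotone, giving $p(i_1)\ge p(m)\ge p(i_2)$ and finishing as before.
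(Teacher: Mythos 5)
Your proof is correct. The paper itself does not prove this lemma --- it is quoted from \cite{PS14} (see also \cite{SShah12}) --- so there is no in-paper argument to compare against; but your structural induction is exactly the kind of self-contained proof the cited work gives, and the essential insight is right: the guard $\psi$ enters only through the next-landmark function $n(i)=\min\{j>i : w,j\models\psi\}$, which is monotone and definedness-inherited regardless of how wild the set $S_w(\psi)$ is, so convexity needs to be propagated only through the inner ranker $\phi_1$ via the induction hypothesis, and the $\top$, $SP$, $EP$ cases are trivial intervals. One small slip in the mirrored case: the previous-landmark function $p(i)=\max\{j<i : w,j\models\psi\}$ is also monotone nondecreasing, so for $i_1<m<i_2$ the correct chain is $p(i_1)\leq p(m)\leq p(i_2)$ (not the reversed inequalities you wrote), and definedness of $p(m)$ is supplied by $w,i_1\models Y_\psi\phi_1$ rather than by $i_2$; since $p(m)$ is still sandwiched between two positions satisfying $\phi_1$, the argument finishes exactly as in the $X_\psi$ case, so this is a cosmetic correction rather than a gap.
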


Simoni Shah \cite{SShah12,PS14} has shown the following result.  
\begin{lemma}
$ \tlrecr ~~\equiv~~ \tlxy$
\end{lemma}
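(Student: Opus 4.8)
The claim is that $\tlrecr \equiv \tlxy$, i.e.\ the recursive-ranker restriction of AtNext logic is no more expressive than the logic of rankers (which characterizes $UL$). One inclusion is easy: $\tlxy$ is literally a syntactic fragment of $\tlrecr$ (take the recursive ranker $\phi$ in $X_\psi\phi$ or $Y_\psi\phi$ to be just an atomic ranker step $X_a$, $Y_a$, built from $\top$), so $\tlxy \leq \tlrecr$. The real content is the converse, $\tlrecr \leq \tlxy$. The plan is to translate each $\tlrecr$ formula into an equivalent $\tlxy$ formula by induction on structure, with the key ingredient being the Convexity Lemma (Lemma~\ref{lem:convex2}) for recursive rankers.

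Here is how I would carry out the induction. A $\tlrecr$ formula is a boolean combination of atoms $a$ and recursive rankers $\phi$, so it suffices to translate each recursive ranker $\phi$ into a $\tlxy$ formula, pushing the translation through $\lor$ and $\neg$ trivially. For a recursive ranker of the form $X_\psi\,\phi_1$, by induction $\psi$ already has an equivalent $\tlxy$ formula $\widehat\psi$, which is a boolean combination of atoms and (by another induction) $\tlxy$ rankers; here is where Convexity enters — since $\psi$ is built from recursive rankers, the set of positions satisfying $\psi$ is \emph{convex} (an interval) in the word. A convex set of positions is pinned down by its left and right endpoints, and ``the first position $>i$ satisfying $\psi$'' is therefore the first position $>i$ that lies in this interval; I would express ``$w,j\models\psi$ for the least such $j>i$'' using a $\tlxy$ ranker that scans for the boundary of the convex block. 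Concretely, one can characterize membership in the convex $\psi$-block by detecting its starting letter (the first $\psi$-position reachable via an appropriate $X_a$ step for the relevant alphabet letters) and verifying the boundary conditions with ranker directionality, $\pfless,\pfleq,\pfgreat,\pfgeq$ (the Ranker Directionality Lemma), exactly as in the $\uitlpm \to \tlxy$ reduction of Theorem~\ref{theo:uitlpm}. The $Y_\psi$ case, and the $SP$, $EP$ prefixes, are symmetric / routine. Combining the boolean-combination structure at the top level with these ranker translations yields the equivalent $\tlxy$ formula.

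The main obstacle is making precise how a recursive modality $X_\psi\phi_1$, whose guard $\psi$ is an arbitrary boolean combination (hence not itself a single ranker), gets simulated by ranker navigation. Convexity is exactly the tool that resolves this: although $\psi$ is a boolean combination, the positions where it holds form one contiguous block, so "go to the next $\psi$-position" reduces to "go to the next position in this block", and a block's boundary is detectable by a single-letter ranker scan followed by a directionality check. Without Convexity the next-$\psi$-position operator could jump over an unbounded, non-convex scatter of positions and would genuinely escape $UL$ (as in the $Stair_k$ examples for full $\rectl$). So the heart of the argument is: (i) invoke Lemma~\ref{lem:convex2} to know $\psi$-positions form an interval; (ii) express the endpoints of that interval, and the "first after $i$" selection, via rankers and ranker directionality; (iii) assemble by a straightforward structural induction mirroring the $\uitlpm$ translation. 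I would note that the resulting $\tlxy$ formula has size polynomial in the original, so the satisfiability bound carries over as well, though the lemma as stated only asserts the expressiveness equivalence.
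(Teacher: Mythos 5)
Your plan for the hard direction applies the Convexity Lemma to the wrong component of $X_\psi\phi_1$, and this is a genuine gap. In \tlrecr\ the guard $\psi$ belongs to the boolean category: it is a boolean combination of letters and recursive rankers, and its set of satisfying positions is in general \emph{not} convex -- already for $\psi=a$ the $a$-positions are scattered arbitrarily, and guards such as $a\land Y_b\top\land X_c\top$ (from the paper's example) or negations/disjunctions of rankers also yield non-convex sets. Lemma~\ref{lem:convex2} applies only to recursive ranker formulas, i.e.\ to the \emph{continuation} $\phi_1$, not to the guard. Consequently your step (i)--(ii) -- ``the $\psi$-positions form one contiguous block, detect its boundary by a single-letter scan plus a directionality check'' -- has no basis: there is no block, and no single letter identifies where a compound guard holds, so the proposed simulation of ``go to the next $\psi$-position'' by $\tlxy$ ranker navigation does not go through. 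The paper's proof instead exploits convexity of the continuation: $X_{\psi_1}\phi_2$ holds at $i$ iff some future position satisfies $\psi_1\land\phi_2$ and no future position satisfies $\psi_1\land\neg\phi_2\land\fut\phi_2$; since the $\phi_2$-positions form a single interval, these two unary conditions force the first $\psi_1$-position after $i$ to lie inside that interval. This yields a translation of \tlrecr\ into \utl\ (with $\fut,\past$), and one then concludes via the known equivalence $\utl\equiv\tlxy$, rather than by building rankers and directionality formulas directly.

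Two smaller points. The easy direction is not a literal syntactic inclusion as you claim: in \tlrecr\ the continuation of $X_\psi$ must itself be a ranker, whereas \tlxy\ allows arbitrary boolean combinations under $X_a$; one first normalizes a \tlxy\ formula into a boolean combination of rankers (using unique parsing, e.g.\ $X_a(\beta_1\lor\beta_2)\equiv X_a\beta_1\lor X_a\beta_2$ and $X_a\neg\beta\equiv X_a\top\land\neg X_a\beta$), which is what the paper does. Finally, your closing claim of a polynomial-size $\tlxy$ formula is unjustified -- the natural translation duplicates subformulas at each recursion level -- but this is harmless since the lemma asserts only expressive equivalence.
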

\begin{proof}
Any $\tlxy$ formula can be syntactically normalized to equivalent boolean combination of rankers, and hence
$\tlxy \subseteq \tlrecr$. 

For the converse, we only give a reduction from $\tlrecr$ to $\utl$ which is known to be expressively equivalent to
$\tlxy$ \cite{EVW02,DGK08}.
For any $\psi \in \tlrecr$, we will construct \utl\/ formulas $\at(\psi)$ such that 
$\forall w\in\alphbt^+$ we have $w,i\models \at(\psi)$ iff $w,i \models \psi$. The construction
is by induction on the structure of $\psi$ (and its rankers $\phi$). 
Define $\at(a) = a$, $\at(\top)=\top$ and $\at(\mathcal B(\phi_1, \ldots \phi_m)) = {\mathcal B(\at(\phi_1), \ldots \at(\phi_m))}$. 
It is easy to see that 
$w,j \models \at(\mathcal B(\phi_1, \ldots \phi_m))$ iff $w,j \models  \mathcal B(\phi_1, \ldots \phi_m)$. 

Now, we give the reduction for recursive ranker formulae $\phi$. 
The figure 
below (from \cite{PS14}) depicts convexity of $\phi=X_\psi \phi_2$.
It shows the positions where $\phi$ holds in a word $w$. 
Note that there is only one convex interval where $\phi_2$ holds. 
$\phi$ holds at positions where 
$F(\phi_2 \land \psi)$ is true but no future position
has $\psi \land \neg \phi_2 \land F\phi_2$. Thus:
\begin{math}
\begin{array}{l}
\at(X_{\psi_1}(\phi_2)) = ~\fut[\at(\psi_1)\land \at(\phi_2)]  ~\land~ 
		\neg\fut[\at(\psi_1)\land \neg \at(\phi_2) \land F \at(\phi_2))], \\
\at(Y_{\psi_1}(\phi_2)) = ~\past[\at(\psi_1)\land \at(\phi_2)]  ~\land~ 
		 \neg\past[\at(\psi_1)\land \neg \at(\phi_2) \land \past \at(\phi_2))]. 		
\end{array}
\end{math}
\end{proof}
\begin{figure}[h]
\begin{center}
\begin{tikzpicture}[scale=1,transform shape]
\draw (5,4.6) node{$\phi = X_{\psi_1}\phi_2$};
\draw[thick] (3,4.5) node{[} -- (7,4.5) node{]};
\draw (0,4) node{$w$}; \draw (0.2,4) node{l}-- (1,4) node{l} -- (2,4) node{l}--(3,4) node{l}--(4,4) node{l}--(5,4) node{l}--(6,4) node{l}--(7,4) node{l}--(8,4) node{l}--(9,4) node{l}--(10,4) node{l};
\draw (1,3.5) node{$\psi_1$}; \draw (3,3.5) node{$\psi_1$}; \draw (5,3.5) node{$\psi_1$}; \draw (6,3.5) node{$\psi_1$}; \draw (8,3.5) node{$\psi_1$}; \draw(10,3.5) node{$\psi_1$};
\draw (5,3.1) node{[}--(8,3.1) node{]};
\draw (6.5,3) node{$\phi_2$};
\end{tikzpicture}
\end{center}
\label{fig:tlrecr}
\end{figure}

\section{Interval Constraints}
\label{sec:ic}
In the previous section we studied various deterministic and deterministically
guarded temporal logics.
It was seen that these allow efficient algorithms and decision procedures 
compared to full \ltl. In this section we retain the ``unary" flavour of
these logics but we expand its scope to gain expressiveness. 
More precisely we consider a unary temporal logic \blintl\/ where 
the binary Until and Since modalities of \ltl\/ are guarded by 
interval constraints on the left, allowing counting or simple algebraic 
operations, forming guarded unary operators $g\until\phi$ and $g\since\phi$. 
The techniques are borrowed from full \ltl, and the complexity of 
decision procedures jumps to that of \ltl. In fact it is one exponent 
more when using binary notation (as is also the case for \ltl). 
The analogues of ``rankers" or ``turtle programs" 
remain to be discovered in this setting.

We do not have precise expressiveness results for most of these logics. 
What is perhaps surprising is that \blintl, even though unary, 
has formulae which reach all levels of the Until/Since hierarchy for \ltl\/
of Th\'erien and Wilke \cite{TW-until} as well as the dot depth hierarchy for
starfree expressions of Cohen, Brzozowski and Knast \cite{CB,BK} and the 
quantifier alternation hierarchy for first-order logic of Thomas \cite{Tho}. 
Thus it is quite an expressive, yet succinct logic. Emerson and Trefler 
argued for introducing counting in binary into temporal logic using 
a starfree expression syntax \cite{ET}. \blintl\/ is moreover elementarily 
decidable, a line of work we have been following \cite{LPS08,LPS,LS,KLPS}.
 
We define some families of constraints below:
respectively, \name{simple}, \name{modulo counting}, \name{group counting},
\name{threshold}, \name{linear} and \name{ordered group} 
constraints, for some of which we consider boolean closure also. 
The constraints with groups are generalizations of those dealing with integers.

Let $B,B_i \subseteq A$, let $c_i\in \Int$, $t,u \in \Nat$. 
For $q \in \Nat \setminus \{0,1\}$, we write $[q]$ for $\{0,\dots,q-1\}$.
We also let $G = \{h_1,\dots,h_k,0\}$ 
be a finite group written additively, 
with its elements enumerated in a linear order.
We assume that the name $G$ identifies the group and this ordering, 
its description does not enter our syntax.
We also consider a finitely generated discretely ordered abelian group $O$ 
with $l,m \in O$,
such that $F = \{g_1,\dots,g_k,0\}$ fixes a linear order over its generators 
and the identity element. 
\[
\begin{array}{lll}
   sg & ::= & \#B = 0  \\
   modg & ::= & \sumover{i} c_i\#B_i \in R \mod q,~\rmwhere R \subseteq [q]\\
   grpg & ::= & \sumover{G}(c_1\#B_1,\dots,c_k\#B_k)\in H,~\rmwhere H\subseteq G\\
   thrg & ::= & t \oprt \#B ~\mid~ \#B \oprt u ~\mid~
	t \oprt \#B \oprt' u,~\rmwhere\oprt,\oprt'\rmin\{<,\leq\}\\
   ling & ::= & modg ~\mid~ thrg\\
   ogpg & ::= & l \oprt \sumover{O}(c_1\#B_1,\dots,c_k\#B_k) \oprt' m\\
   bsg & ::= &  sg ~\mid~ bsg_1 \land bsg_2 ~\mid~ \neg bsg ~\mid bsg_1 \lor bsg_2\\
   btg & ::= &  thrg ~\mid~ btg_1 \land btg_2 ~\mid~ \neg btg ~\mid btg_1 \lor btg_2\\
   bg & ::= & ling ~\mid~ bg_1 \land bg_2 ~\mid~ \neg bg ~\mid bg_1 \lor bg_2
\end{array}
\]
For a modulo counting constaint, if $R$ is a singleton $\{r\}$ we write 
$\sumover{i}c_i\#B_i\equiv r$.
For a threshold counting constraint, if $t=u$ we write $\#B=t$.

Given a word $w \in A^+$ and $x,y \in dom(w)$, let $\#B(w,x,y)$ denote 
the number of occurrences of letters in $B$ positions $x$ to $y$ inclusive. 
Also, given group $G$, define $G(w,z) = c_j h_j$ 
(and $O(w,z) = c_j g_j$, respectively) 
if $w[z] \in B_j \setminus (B_1 \cup \dots \cup B_{j-1})$
for $1 \leq j \leq k$, and otherwise zero (the identity element) if
$w[z] \notin (B_1 \cup \dots \cup B_k)$. 
We say:
\[
\begin{array}{l}
w,[x,y] \models \sumover{i}c_i\#B_i \in R \mod q ~\fif~
\sumover{i} c_i\#B_i(w,x+1,y-1) \in R \mod q\\ 
w,[x,y] \models \sumover{G}(c_1\#B_1,\dots,c_k\#B_k) \in H ~\fif~
\sumbounds{z=x+1}{y-1} G(w,z) \in H\\
w,[x,y] \models t \oprt \#B \oprt' u ~\fif t \oprt \#B(w,x+1,y-1) \oprt' u\\
w,[x,y] \models l \oprt \sumover{O}(c_1\#B_1,\dots,c_k\#B_k) \oprt' m ~\fif
	l \oprt \sumbounds{z=x+1}{y-1} O(w,z) \oprt' m
\end{array}
\]
This can be extended to boolean guards as usual.

Our logic \blintl\/ over $A$ has the following syntax, where the 
Until and Since ($\until,\since$) modalities of \ltl\/ are used in 
a unary fashion.
\[
\phi ::=   a ~\mid~ \neg \phi ~\mid~ \phi \lor \phi ~\mid~ bg\until\phi ~\mid~ bg\since\phi
\]
Given a word $w \in A^+$ and position $i \in dom(w)$, 
the semantics of a \blintl\/ formula is given below. 
Boolean operators have the usual meaning.
The same definitions would work for infinite words, which are
more usual as models for temporal logics.
\[
\begin{array}{l}
w,i \models a  ~\fif~ w[i]=a \\
w,i \models bg\until\phi ~\fif~ \exists j>i \such w,[i,j] \models bg \rmand w,j \models \phi \\
w,i \models bg\since\phi ~\fif~ \exists j<i \such w,[j,i] \models bg \rmand w,j \models \phi
\end{array}
\]
\paragraph{Size}
Size $|\phi|$ of a formula $\phi$ and modal depth are defined as usual.
Constants are encoded in binary and size of a set of letters $B$ is the number of elements in $B$.
Thus, $|(\neg (\#\{b,c\} > 1) \land \#a =17)\until a|$ is 
$max(|\neg \#\{b,c\} > 1|,~|\#a =17|)+1$, which works out to
$max(2,\lceil \log_2 17 \rceil)+1=6$.

\paragraph{Abbreviations}
We shall use the abbreviation $B\until \phi$ for $(\#(A-B) = 0)\until \phi$.
Also, $\fut \phi = A\until \phi = \true \until \phi$,
$\henceforth\phi = \lnot\fut\lnot\phi$,
$\nextt \phi = \emptyset\until \phi = \false\until\phi$. 
For a guard $g$, the formula $\now g = g\since(\neg\prev\true)$ gives the
current velaue of a guard evaluated from the first position of the word.
For initializing and updating guards, we use:
\begin{itemize}
\item If $g$ is $\sumover{i} c_i\#B_i \equiv r\mod q$, then
$g(0)$ is $\sumover{i} c_i\#B_i \equiv 0\mod q$ and
$g+c_j$ is $\sumover{i} c_i\#B_i \equiv r+c_j\mod q$.
\item If $g$ is $\sumover{G}(c_1\#B_1,\dots,c_k\#B_k)=h$, then
the guard $g(0)$ is $\sumover{G}(c_1\#B_1,\dots,c_k\#B_k)=0$ and
the guard $g+h'$ (we will use $h'=c_j h_j$ below) is 
$\sumover{G}(c_1\#B_1,\dots,c_k\#B_k)=h+h'$.
\item If $g$ is $\#B\oprt v$ and $a \in A$, then
$g-a$ is $\#B\oprt v-1$ if $a \in B$, and $g$ otherwise.
\end{itemize}

\paragraph{Sublogics}
\begin{itemize}
\item Logic \btlth\/ is a subset of \blintl\/ where modalities use only
threshold constraints $btg\until\phi$ and $btg\since\phi$. 
\item Logic \btlinv\/ is  subset of \btlth\/ where modalities use boolean
combinations of simple constraints $bsg\until\phi$ and $bsg\since\phi$. 
\item Logic \utlinv\/ is  subset of \btlinv\/ where modalities use only
simple constraints $sg\until\phi$ and $sg\since\phi$. 
\item Logic \utlmodinv\/ is a subset of \blintl\/ where modalities use
only simple and modulo counting constraints 
$sg\until\phi,modg\until\phi$ and $sg\since\phi,modg\since\phi$.
\end{itemize}

We note that Unary \ltlunsuc\/ is a subset of \btlth.
The guard $\#A=u$ expresses the $u+1$-iterated Next operator $\nextt^u$.
Since $n$ is written in binary this gives an exponential succinctness
to this logic over \ltlunsuc.

\paragraph{Examples}
The formula 
$(\#b\equiv 1\mod 3)\until((\#a\equiv 0\mod 2)\until\neg\nextt\true)$
says that every word has $3n+1$ occurrences of the letter $b$, 
for some $n \geq 0$, followed by an even number of occurrences of the 
letter $a$, excluding the last letter on the word. Such modulo counting 
is not expressible in \ltl\/ or first-order logic \cite{Wolper}.
Notice that the syntax allows nesting $\until$ (and $\since$) modalities
on the right but not on the left. 

Several interesting languages not in \ltlunsuc\/ 
can be specified in \btlth. 

The language $Stair_k$ which specifies $k$ occurrences of the letter $a$ 
without any intermediate occurrences of letter $b$ \cite{EW} is specified
by the formula $\fut(a \land (\#b=0 \land \#a=k-2)\until a)$.

The formula
$\henceforth( b \land (c\until b) \limplies \past(a \land (c\since a)))$
defines language $U_2$ in \utlinv\/ (a simpler version appears in \cite{LPS}), 
which specifies over a 3-letter alphabet that if a word has an occurrence of
two $b$'s without an $a$ between them, then it must be preceded by an 
occurrence of two $a$'s without a $b$ between them. 

One can also define expressively equivalent two-variable fragments 
of first-order logic corresponding to the classes of interval constraints, 
as in our earlier work \cite{KLPS}, but we do not pursue this here.

\subsection{Expressiveness}

Given logics $L_1$ and $L_2$ over finite words, we can relate them by their expressive powers. We use $L_1 \subseteq L_2$ if for $\forall \phi \in L_1 \exists
\psi \in L_2 \st (w \models \phi \fif w \models \psi)$. 
We use $L_1 \equiv L_2$ if $ L_1 \subseteq L_2$ and $L_2 \subseteq L_1$.
The next two theorems show that Boolean operations over threshold and 
modulo counting constraints can be eliminated. In fact threshold counting 
can be reduced to invariant counting or to modulo counting. In our 
earlier paper \cite{KLPS}, we used the first reduction as the basis 
for a decision procedure. Here we use the second theorem as the basis
for our decision procedure.

\begin{theorem}
\label{theo:bthtlred}
$\btlth ~\equiv~ \utlinv$.
\end{theorem}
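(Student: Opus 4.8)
The plan is to prove the two inclusions separately; $\utlinv\subseteq\btlth$ is immediate and all the content is in $\btlth\subseteq\utlinv$. For the easy direction, a simple constraint $\#B=0$ is literally the threshold constraint $\#B\leq 0$ (equivalently $0\leq\#B\leq 0$), hence in particular a boolean combination of threshold constraints, so every \utlinv\/ formula already \emph{is} a \btlth\/ formula. For the converse I would induct on the structure of a \btlth\/ formula, translating it to an equivalent \utlinv\/ formula; atoms $a$ and boolean connectives are trivial, so the only real case is a modality $btg\until\phi$ (the $\since$ case being the left--right mirror), where by the induction hypothesis $\phi$ is already a \utlinv\/ formula.

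First I would normalize the guard. Since $w,i\models(g_1\lor g_2)\until\phi$ iff $w,i\models(g_1\until\phi)\lor(g_2\until\phi)$ (immediate from the semantics), and since the negation of a threshold constraint is again a threshold constraint or a disjunction of two — e.g.\ $\lnot(t\leq\#B\leq u)$ is $(\#B\leq t-1)\lor(u+1\leq\#B)$ — one can put $btg$ into disjunctive normal form over threshold atoms and distribute the outer disjunctions out of the modality. After splitting each interval constraint $t\sim\#B\sim'u$ into a lower and an upper bound, it remains to translate a single modality of the form $\big(\bigwedge_{l}\#B_l\geq t_l\ \land\ \bigwedge_m\#C_m\leq u_m\big)\until\phi$ (an empty guard being $\top$, so the modality is just $\fut\phi$).

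The core step is a recursion that walks rightwards along the letters relevant to the guard while decrementing a whole vector of counters. For nonnegative integer vectors $\vec s=(s_l)_l$ of residual lower-bound requirements and $\vec b=(b_m)_m$ of residual budgets, define a \utlinv\/ formula $\mathit{Walk}[\vec s,\vec b](\phi)$ by recursion on $\sum_l s_l+\sum_m b_m$, intended to hold at a position $q$ exactly when there is $j>q$ with at least $s_l$ letters of $B_l$ and at most $b_m$ letters of $C_m$ strictly between $q$ and $j$ and with $w,j\models\phi$. Writing $B^{+}=\bigcup_{l:s_l\geq 1}B_l$, $C^{+}=\bigcup_m C_m$, $E=B^{+}\cup C^{+}$, and calling a single letter $d$ \emph{admissible} when $d\notin C_m$ for every $m$ with $b_m=0$, put, when $B^{+}=\emptyset$,
\[ \mathit{Walk}[\vec 0,\vec b](\phi)=\big((\#C^{+}{=}0)\until\phi\big)\ \lor \bigvee_{d\in C^{+}\text{ admissible}}\big((\#C^{+}{=}0)\until(d\land\mathit{Walk}[\vec 0,\vec b^{\,d}](\phi))\big), \]
and, when $B^{+}\neq\emptyset$,
\[ \mathit{Walk}[\vec s,\vec b](\phi)=(\#E{=}0)\until\Big(\bigvee_{d\in E\text{ admissible}}\big(d\land\mathit{Walk}[\vec s^{\,d},\vec b^{\,d}](\phi)\big)\Big), \]
where $s^{\,d}_l=\max(s_l-[d\in B_l],0)$ and $b^{\,d}_m=b_m-[d\in C_m]$. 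Admissibility guarantees that every recursive call strictly decreases $\sum s_l+\sum b_m$, so the recursion terminates (its depth is bounded by that sum, hence exponential in the binary-encoded constants, which matches the expected exponential succinctness). The translation of the conjunctive modality above is then $\mathit{Walk}[(t_l)_l,(u_m)_m](\phi)$ at the current position, and the $\since$ case is the symmetric leftward walk.

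Correctness of $\mathit{Walk}$ is by induction on $\sum s_l+\sum b_m$: when $B^{+}=\emptyset$ a witness $j$ either meets no $C^{+}$-letter after $q$ (first disjunct) or its first such letter is an admissible $d$ at a position $p<j$, after which $(p,j)$ must meet the budgets $\vec b^{\,d}$; when $B^{+}\neq\emptyset$ every witness $j$ must contain an $E$-letter in $(q,j)$ (it already must contain a $B_{l_0}$-letter for some $l_0$ with $s_{l_0}\geq 1$), so one steps to the first such letter $p$, whose letter $d$ updates $\vec s,\vec b$ exactly as shown, and conversely every position the formula steps to has no relevant letter strictly before it, so precisely the claimed counts are produced. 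I expect the single real obstacle to be exactly this simultaneous treatment of several threshold constraints over different letter sets: one constraint $\#B\geq t$ or $\#B\leq u$ is easily handled by iterating ``step to the next $B$-letter'', but with several sets the counters interact, and it is the vector-valued walk along the union $E$ of all relevant letters that resolves this; the guard normalization, the $\since$ mirror, and the outer induction on formula structure are all routine.
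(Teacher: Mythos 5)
Your proposal is correct and follows essentially the same route as the paper: normalize the threshold guard to a disjunction of conjunctions (eliminating negations and pulling disjunctions out of the modality), then translate each conjunctive guard by a terminating recursion that steps to the next letter relevant to the guard under a simple $\#E=0$ constraint while decrementing the residual counts, exactly as in the paper's rule $(AC\cup BC\cup CC)\until\psi$. The only cosmetic difference is that you keep upper bounds as residual budgets with an admissibility condition, whereas the paper first expands $\#B\leq u$ into the disjunction $\#B=0\lor\ldots\lor\#B=u$ and decrements exact equalities; both yield the same exponential blowup.
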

\begin{proof}
Threshold constraints $\#B \geq 0$ and $\#\emptyset = 0$ 
can be replaced by $\true$, $\#A=0$ by $\false$. 
Also multiple upper bounds and
lower bounds on the same set of letters can be combined, for example
replacing $(\#B \geq t_1 \land \#B \geq t_2)$ by $\#B \geq max(t_1,t_2)$.
We also remove obviously contradictory conjunctions.
To eliminate negations, we have:
\[\neg (\#B \geq t ) \equiv \#B < t,~ 
\neg (\#B \leq u ) \equiv \#B > u,~
\neg (t \leq \#B \leq u) \equiv (\#B < t) \lor (\#B > u).
\]
Only one of the disjuncts above can hold for all prefixes, 
since the count of a letter cannot jump from below $t$ to above $u$. 
(If $t > u$ comes from another conjunct inside the negation, 
both disjuncts hold since we have a tautology.)
This generalizes for 
a non-tautological disjunction of threshold constraints $\eta_1,\eta_2$ to:
\[
   (\eta_1 \lor \eta_2)\until \phi ~\equiv~ 
       (\eta_1 \until \phi) ~\lor~  (\eta_2\until \phi) 
\]
We will not specify mirror image rules for the past modalities here and below.
As usual, the boolean conditions can be put in disjunctive normal form.
Applying these rules we can obtain  $bg \equiv \orover{} CN$ 
where $CN$ is conjunction of simple threshold constraints $g$. 

Finally, let us consider a \btlth\/ constraint of the form $\#B \leq u$
which has been brought to this form. 
This can be replaced by $\#B = 0 \lor \ldots \lor \#B = u$, 
and the disjunctions can be moved outside the modalities.

With all this, we obtain $bg\until \psi \equiv \orover{} (NCN\until \psi)$ 
where $NCN$ is a conjunction of equality and lower bound constraints, 
each set of letters $B$ occurring in at most one constraint. 
$NCN = AC \cup BC \cup CC$ where $AC$ are constraints of the form $\#B =0$, 
$BC$ are constraints of the form $\#B = c$ with $c>0$ and 
$CC$  are constraints of the form $\#B \geq c$ with $c>0$. 
We have:
\[
(AC \cup BC \cup CC)\until \psi ~\equiv~ 
\orover{a \in (BC \cup CC)-AC} (AC=0 \cup BC=0 \cup CC=0) 
\until(a \land (AC \cup BC-a \cup CC-a)\until \psi)
\]
By repeated application of the above rule, 
we can get an equivalent formula where all
the constraints $AC$ are conjunctions of the form
$\#B_i = 0$. This is equivalent to a single constraint $\#\cup B_i = 0$. 
A similar reduction can be carried out for the past modalities. Hence,
$\btlth \subseteq \utlinv$. 

Starting with a \btlth\/ formula, the reduction gives rise to 
an exponential (in product of constant $c$ and alphabet size $m$) 
blowup in modal depth of the formula, 
since updating by an occurrence of $a$ changes $\#B=c+1$ to $\#B=c$
for $a \in B$.
Starting with \btlinv, the modal depth increases by one for each letter of alphabet,
since updating by an occurrence of $a$ in $B$ changes 
$\#B>0$ and $\#B=0$ to $\true$. Hence modal depth blows up by 
the size of the alphabet.
\end{proof}

\begin{corollary}[\cite{KLPS}]
\label{coro:bthtl}
The satisfiability of \btlth\/ is complete for \Expspace.
\end{corollary}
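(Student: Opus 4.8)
The plan is to establish the two bounds separately, in both cases borrowing the standard automata-theoretic toolkit of \ltl.

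\emph{Membership in \Expspace.} I would run the Sistla--Clarke satisfiability procedure \cite{SC}, adapted to the guarded modalities: from a \btlth\ formula $\phi$ one builds a nondeterministic automaton on finite words whose states are ``types'' --- locally consistent subsets of a closure of $\phi$ --- augmented, for each guarded-until or guarded-since instance currently pending, with the residual value of its threshold count. The only novelty over plain \ltl\ is that a guarded modality sitting inside an iteration such as \henceforth\ can have up to $c$ pending instances simultaneously, where $c$ is the largest constant appearing in $\phi$ (this is exactly what makes the logic ``one exponent harder''); but each pending instance needs only its residual count, i.e.\ $O(\log c)=O(|\phi|)$ bits, and there are only polynomially many syntactic guards, so a single state still has a description of size $2^{O(|\phi|)}$. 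The automaton is therefore doubly-exponentially large but with exponential-size state descriptions, and nonemptiness can be decided on the fly --- guessing successive states and checking transitions, each operation manipulating exponential-size objects --- in nondeterministic, hence (by Savitch) deterministic, \Expspace. This is in essence the decision procedure of \cite{KLPS}, which reaches \btlth\ through \utlinv\ via Theorem~\ref{theo:bthtlred}.

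\emph{\Expspace-hardness.} I would give a polynomial-time reduction from a canonical \Expspace-complete problem, say corridor tiling of width $2^n$ (equivalently, acceptance of the empty input by a deterministic Turing machine of space $2^n$). A word encodes the tiling row by row, each row being a block of exactly $2^n$ positions. The crucial gadget is the succinct next-power $\nextt^{2^n}\psi$, which abbreviates $(\#A = 2^n-1)\until\psi$ and so has size $O(n)$: with it one states, at bounded modal depth and polynomial size, that the tile at any position is vertically compatible with the tile $2^n$ positions later (the cell directly above it in the next row). Horizontal compatibility of neighbours uses $\nextt$; the first block is fixed to the (succinctly given) initial row; acceptance is imposed by a single $\fut$; and a fresh marker letter, forced by $\nextt^{2^n}$ together with suitable end conditions to occur exactly at the block boundaries, lets one disable the horizontal constraint across blocks. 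The resulting formula has size polynomial in $n$ and is satisfiable iff the instance has a valid tiling, which yields \Expspace-hardness; together with the upper bound this gives \Expspace-completeness.

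The upper bound is the routine half --- it merely instantiates the \ltl\ construction one exponent up, once one has observed that a pending guard instance costs only logarithmically many bits. The main obstacle is the lower bound, specifically the block-boundary and ``cell directly above'' bookkeeping: one must verify that these are expressible with polynomially many symbols despite \btlth\ being unary and forbidding $\until$-nesting on the left, and that satisfiability of the constructed formula matches the existence of a tiling exactly. The safest route keeps every global requirement in the rigid shape $\henceforth(\ell \limplies \nextt^{2^n}(\dots))$, so that counting never leaves a single logarithmic-size guard.
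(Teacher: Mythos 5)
Your proposal is correct and reaches the same complexity bounds, but by a more self-contained route than the paper on both halves. For the upper bound the paper does not build an automaton on \btlth\ directly: it invokes the reduction of Theorem~\ref{theo:bthtlred} (elimination of boolean and threshold guards in favour of simple ones), which yields an exponentially larger formula easily rendered in \ltl\ syntax, and then applies the Sistla--Clarke \Pspace\ procedure \cite{SC}, giving \Expspace. You instead keep the original formula and push the bookkeeping into automaton states carrying residual counts for pending guarded modalities; this is closer in spirit to the paper's later formula-automaton construction for \blintl\ (which uses global counters via Theorem~\ref{theo:btlthredmod} rather than per-instance residuals), and it avoids the exponential formula blow-up at the price of exponential-size state descriptions and an on-the-fly nonemptiness check. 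One small inaccuracy in your accounting: since \btlth\ guards are boolean combinations of threshold constraints, a pending instance carries a vector of capped counts, one per subalphabet mentioned in the guard, not a single residual; the set of reachable residual configurations per guard is still single-exponential, so your conclusion stands. For the lower bound the paper does not give a reduction at all: it observes that the binary counting Next and Future modalities $\nextt^u,\fut$ are definable in \btlth\ and cites the known \Expspace-hardness results \cite{AH,ET}. Your exponential-corridor-tiling encoding via $\nextt^{2^n}\psi=(\#A=2^n-1)\until\psi$ is essentially the argument underlying those citations, reconstructed inside the unary, right-nested syntax of \btlth; it buys self-containedness at the cost of the block-boundary bookkeeping that the citation route sidesteps.
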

\begin{proof}
The transation above gives an exponential-sized formula, which is 
easily translated into the syntax of \ltl. By the decision procedure 
for \ltl\/ \cite{SC}, this gives an \Expspace\/ upper bound.
Since the Counting Next and Future modalities ($\nextt^u,\fut$)
of $\ltl$ (with $u$ in binary) are definable,
\Expspace\/ is also a lower bound \cite{AH,ET}.
\end{proof}

\begin{theorem}
\label{theo:btlthredmod}
$\blintl ~\equiv~ \utlmodinv$.
\end{theorem}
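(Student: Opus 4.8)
The plan is to follow the proof of Theorem~\ref{theo:bthtlred} closely. One direction is immediate: every simple constraint is expressible as a threshold constraint $0\leq\#B\leq 0$ and every $modg$ is a linear constraint, so $\utlmodinv \subseteq \blintl$. For the converse I would rewrite a $\blintl$ formula modality by modality, transforming each $bg\until\phi$ (and symmetrically $bg\since\phi$) into the required shape, working from the outside in.

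First I would handle the Boolean structure of the guard exactly as in Theorem~\ref{theo:bthtlred}: put $bg$ in disjunctive normal form and move disjunctions outside the modality via $(g_1\vee g_2)\until\phi \equiv (g_1\until\phi)\vee(g_2\until\phi)$, which is valid because $\until$ is existential in its right endpoint. The surviving literals are then $thrg$, $\neg thrg$, $modg$ and $\neg modg$. Threshold literals, and their negations (which become threshold constraints, or the disjunction $\#B<t \vee \#B>u$ exactly one disjunct of which holds on any prefix), are treated verbatim as there. The point of departure is that a negated modulo constraint $\neg(\sum c_i\#B_i\in R\bmod q)$ is again a modulo constraint $\sum c_i\#B_i\in([q]\setminus R)\bmod q$, so modulo literals cause neither disjunctions nor blow-up; and a conjunction of several modulo constraints reduces to a single one by Chinese remaindering (collapse each letter to its weight, pass to the least common multiple of the moduli, and re-encode the tuple of residues by one residue). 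After this pass each guard is a conjunction of one simple constraint $\#B=0$ (the union of the zero-constraints, as in Theorem~\ref{theo:bthtlred}) with at most one modulo constraint, together with the remaining positive threshold constraints.

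The heart of the argument is the same ``counting down'' used in Theorem~\ref{theo:bthtlred}: repeatedly guess which letter $a$ supplies the next occurrence of a constrained letter, require that nothing constrained occurs before it ($\#B=0$), place $a$ there, decrement the threshold constraints, and recurse; this drives all exact-count constraints down to $\#B=0$. The new work is that the modulo constraint is evaluated over the same interval, so at each split of the interval at the guessed occurrence of $a$ we must also split the residue: over the prefix block (which by the guard contains only unconstrained letters) we guess a partial residue, over the remaining suffix we require the target residue shifted by the partial residue and by the weight $v(a)$ of the boundary letter, and we take a disjunction over all consistent choices modulo $q$. This multiplies the number of cases only by a function of $q$, i.e.\ exponentially in the binary size, in line with the exponential modal-depth blow-up already incurred by threshold elimination; composed with an \ltl\ decision procedure extended with modulo counting this yields an \Expspace\ upper bound for satisfiability of $\blintl$, matching the lower bound inherited from the corollary to Theorem~\ref{theo:bthtlred} (since $\btlth\subseteq\blintl$).

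The step I expect to be the main obstacle is exactly this residue bookkeeping through the threshold counting-down, carried out consistently with the open-interval convention of $\blintl$ (counts run over positions strictly between the endpoints, while each guessed occurrence of $a$ is a boundary position contributing $v(a)$), together with checking that every modality produced at the end has a permitted guard — a lone $\#B=0$ or a lone $modg$. The two can always be separated: the $\#B=0$ conjunct is non-trivial only in the gaps created by threshold elimination, and in such a gap the modulo constraint involves only letters outside $B$, so it can be pushed into the Boolean skeleton and local block structure rather than kept conjoined with the guard. Making this separation precise, and verifying the residue arithmetic, is the only genuinely new content beyond Theorem~\ref{theo:bthtlred}; the remaining manipulations are the routine normalisations already used there (see also \cite{SShah12,KLPS}).
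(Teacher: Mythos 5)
Your easy direction and the guard normalisation are fine (distributing the until over a disjunction of guards, complementing the residue set $R$ to eliminate negated modulo literals, taking lcms of moduli), and the plan of re-running the counting-down of Theorem~\ref{theo:bthtlred} while splitting residues at each guessed occurrence is coherent as far as it goes. But the step you defer to the end --- converting a guard of the form $\#B=0 \land modg$ into modalities whose guards are a lone $sg$ or a lone $modg$ --- is not a routine separation; it is the actual content of the theorem, and the reason you give (``in such a gap the modulo constraint involves only letters outside $B$, so it can be pushed into the Boolean skeleton'') does not supply a mechanism. Note that the problem already arises with no thresholds present at all: $(\#\{a\}=0\land \#\{b\}\equiv 1\mod 2)\until\phi$ is a \blintl\ formula whose guard is a conjunction of a simple and a modulo constraint. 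A modality guarded only by the modulo constraint cannot force its witness to be the \emph{first} $a$ after the current position $i$, and a modality guarded only by $\#\{a\}=0$ cannot see the $b$-residue of the gap; the two conditions must hold of the same witness $j$, and inside nested modalities $i$ is not identifiable from $j$, so neither conjunct can simply be moved ``into the Boolean skeleton''. Your residue-splitting at chop points only localises this difficulty, it does not resolve it.

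The idea that fills this hole in the paper is that of \emph{global counters}: the formulas $\now g = g\since(\neg\prev\true)$ are legitimate \utlmodinv\ formulas (a modulo-guarded Since anchored at the first position), and a count over the interval from $i$ to $j$ is recovered by guessing the residue $r_0$ of the global counter at $i$ and asserting the correspondingly shifted residue at the witness $j$; a threshold such as $\#B<u$ becomes ``the global $B$-counter does not return to residue $r_0$ modulo $u$ before the witness''. With these tautologies the paper reduces threshold guards directly to modulo constraints on global counters and never performs the counting-down of Theorem~\ref{theo:bthtlred} at all, so the blow-up is only the expected exponential in the binary constants rather than the additional unfolding your route incurs. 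Your argument can be repaired by replacing the vague separation step with exactly this global-counter construction, but as written the key idea of the proof is missing.
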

\begin{proof}
Modulo and threshold counting requirements for guarded Until formulas 
can be reduced to checking \name{global counters}---constraint values
$\now g$ counted ``from the beginning''. 
This is shown in the tautologies below.
The first line reduces modulo counting formulae, after that we reduce 
threshold to modulo counting.
The right hand formulae below have size multiplied by a factor of $q$ or $u$,
so they are exponential in the binary representation of 
$q,u\geq 2$ or $t\geq 1$. 
\[\begin{array}{l}
(\sumover{i}c_i\#B_i \equiv r\mod q)\until\phi \eqvt \orover{r_0 \in [q]} 
(\now \sumover{i}c_i\#B_i \equiv r_0\mod q)\land
\fut(\phi \land (\now \sumover{i}c_i\#B_i \equiv r_0+r\mod q)\\
(\#B<u)\until\phi \eqvt \orover{r_0 \in [u]} 
(\now \#B\equiv r_0\mod u)\land\lnot(\now \#B\equiv r_0\mod u)\until\phi\\
(\#B=u-1)\until\phi \eqvt\\
\hspace{2cm}\orover{r_0 \in [u]}(\now \#B\equiv r_0\mod u)
	\land
	\lnot(\now \#B\equiv r_0\mod u)
	\until(\phi \land(\now \#B\equiv r_0-1\mod u))\\
(t\leq\#B)\until\phi \eqvt (\#B=t)\until(\fut\phi)\\
(t\leq\#B<u)\until\phi\eqvt\orover{r_0 \in [u]}(\now \#B\equiv r_0\mod u)\land\\
\hspace{2cm}\lnot(\now \#B\equiv r_0+t\mod u)\until
		((\now \#B\equiv r_0+t\mod u)\land
		\lnot(\now \#B\equiv r_0\mod u)\until\phi)
\end{array}
\]
Now observe that for modulo (and group) counting constraints
one can perform the boolean operation on the specified elements $R$ or $H$.
For different moduli, we have to take least common multiples of the quotients
leading to a polynomially larger formula.
(For different groups, we have to take products.)
\end{proof}

In each case above, only one of the right hand disjuncts can hold. 
At a given point in a model, it is possible that both 
$(\#a=10)\until\phi$ and $(\#a=5)\until\phi$ hold, but the value $r$ of the 
global $a$-counter $\now \#a\equiv r\mod u$ is unique. 
We will use this below.

\subsection{Subformulas and the formula automaton}
Fix a formula $\alpha_0$. 
The \name{Fischer-Ladner closure} of a formula $\alpha_0$ \cite{FL} 
is constructed as usual, some of the clauses below are 
based on the global counter tautologies in Theorem~\ref{theo:btlthredmod}.

\begin{enumerate}
\item $\alpha_0$ is in the closure.
\item If $\phi$ is in the closure, $\lnot\phi$ is in the closure.
We identify $\lnot\lnot\phi$ with $\phi$.
\item If $\phi \lor \psi$, $\phi\until\psi$ and 
$\phi\since\psi$ are in the closure, so are $\phi$ and $\psi$.
\item The closure of a set with 
$(\sumover{i} c_i\#B_i \in R \mod q)\until\phi$ includes:\\
$\fut(\phi\land \now \sumover{i} c_i\#B_i \equiv r \mod q)$ 
and $\now \sumover{i} c_i\#B_i \equiv r \mod q$,
for every $r$ in $[q]$.
\item The closure of a set with 
$(\sumover{G}(c_1\#B_1,\dots,c_k\#B_k) \in H)\until\phi$ includes:\\
$\fut(\phi\land \now \sumover{G}(c_1\#B_1,\dots,c_k\#B_k)=h)$ 
and $\now \sumover{G}(c_1\#B_1,\dots,c_k\#B_k)=h$, 
for every $h$ in $G$.
\item The closure of a set with $(t \leq \#B < u)\until\phi$ includes:\\
$(\lnot(\now \#B\equiv r\mod u)\until
(\now \#B\equiv r\mod u)\land
	\fut(\phi\land \now \#B\equiv s\mod u)$,\\
$\fut(\phi\land \now \#B\equiv s\mod u)$ and
$\now \#B\equiv s\mod u$, for every $r$ and $s$ in $[q]$.
%
\end{enumerate}

Unlike the usual linear size for LTL, since
the constants $c_i$, $R,r,s,q$, $H,h$, $t,u$, $l,o,m$ 
are written in binary notation,
the closure of a modulo or group counting formula $\alpha_0$ 
is exponential in the size of $\alpha_0$. In the case of a
threshold formula the closure is $O(2^{|\alpha_0|^2})$. 

A \name{state} (sometimes called an \name{atom}) 
is a maximal Hintikka set of formulae 
from the Fischer-Ladner closure of $\alpha_0$. 
Gabbay, Hodkinson and Reynolds \cite{GHR} use the more classical notion of a 
\name{$k$-type} (Hintikka set with formulas upto modal depth $k$).
Assume an enumeration of formulae in the state. 
Instead of using an explicit indexing, 
we loosely use the formula $\phi$ as though it uniquely identifies 
a particular formula of the form $g \until \phi$ or $g \since \phi$. 

For every constraint,
only one of the exponentially many global counter formulae with 
modulus value $r$ in $[q]$,
or with group element value $h$ in $G$,
can hold in a state.
Hence the number of states, although it has a subset of the closure of 
$\alpha_0$ which is already exponential in the size of $\alpha_0$, 
grows only exponentially with the size of $\alpha_0$ even though 
the modulo and group counting constants are represented in binary 
\cite{LS,Sree}.
So a state can be represented using space polynomial in the size of $\alpha_0$.

Next we define a \name{transition} relation from state $s_1$ to state $s_2$.
Suppose $g\until\phi$ is in $s_2$, we specify the requirements on $s_1$,
and if it is in $s_1$, then the requirements on $s_2$. 
Looking at the requirements below, it is easy to derive 
the mirrored requirements for $g\since\phi$.
Assume without loss of generality that
all subalphabets $B_i$ mentioned in the guard $g$ 
are disjoint from each other. 

\begin{enumerate}
\item If $g$ is $\sumover{i} c_i\#B_i \equiv r\mod q$ 
and if $a \in B_j$ for some $j$ is in $s_2$, then:
\begin{itemize}
\item $g\until\phi$ in $s_2$ implies $(g+c_j)\until\phi$ in $s_1$;
\item $g\until\phi$ in $s_1$ implies $(g-c_j)\until\phi$ in $s_2$.
\end{itemize}
\item If $g$ is $\sumover{G}(c_1\#B_1,\dots,c_k\#B_k)=h$ 
and if $a \in B_j$ for some $j$ is in $s_2$, then:
\begin{itemize}
\item $g\until\phi$ in $s_2$ implies $(g+c_j h_j)\until\phi$ in $s_1$;
\item $g\until\phi$ in $s_1$ implies $(g-c_j h_j)\until\phi$ in $s_2$.
\end{itemize}
\item If $g$ is a threshold constraint $\#B=0$ and 
if $a \in B$ is in $s_1$, then
$g\until\phi$ in $s_1$ implies $\phi$ in $s_2$.
\item
In each case above, if the alphabetic precondition is not satisfied,
depending on whether it was assumed to be in $s_2$ or $s_1$,
then $g\until\phi$ is required to be in $s_1$ or $s_2$ respectively.
\item
In each case of modulo constraint $g$ above, 
if $\phi$ is in $s_2$, then $g(0)$ is in $s_1$.
\end{enumerate}

Since there are exponentially many states, each state as well
as the transition relation of an exponential size 
\name{formula automaton} can be represented in polynomial space. 

\subsection{Decision problems}

For the unary \ltlun, Ono and Nakamura \cite{ON} 
use the convexity of the Future and Past ($\fut,\past$) modalities 
to derive that only polynomially many distinct states need appear 
on a path to witness the satisfaction of the modalities, and hence 
that its satisfiability and model checking problems are decidable in \Np. 
Sistla and Clarke, and Lichtenstein and Pnueli \cite{SC,LP85} showed that 
the satisfiability problem for \ltl\/ is in \Pspace\/
(see also the monograph \cite{GHR} for an analysis based on types), 
since a nondeterministic algorithm can guess the states and 
verify transitions between consecutive states to find an accepting path. 
The ``automaton'' formulation made it easier to analyze 
logics on infinite words \cite{VW,VW08}.

\begin{theorem}
The satisfiability problem for \blintl\/ is in \Expspace.
\end{theorem}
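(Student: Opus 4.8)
The plan is to read off the decision procedure from the formula automaton built in the preceding subsections, in the style of the Sistla and Clarke proof that \ltl\/ satisfiability is in \Pspace, but paying one extra exponent for the binary encoding of the counting constants (so the bound matches the \Expspace\/ lower bound already noted for \btlth\/ in Corollary~\ref{coro:bthtl}). Fix the input $\alpha_0\in\blintl$. First I would take its Fischer--Ladner closure as defined above, whose clauses already fold in the global-counter tautologies of Theorem~\ref{theo:btlthredmod}; because the moduli $q$ and thresholds $t,u$ are written in binary, this closure, and hence the set of \emph{states} (maximal Hintikka sets over it), is of exponential size, and so is the transition relation of the resulting formula automaton $\mathcal A_{\alpha_0}$. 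The point needed from the observation made just above is that, for each guard $g$, exactly one of the alternatives ``$\now g\equiv r$'' with $r\in[q]$ can hold in a state, so the number of states is only exponential (not doubly exponential) in $|\alpha_0|$, and a state together with membership in the transition relation can be represented and checked within polynomial space.

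The second step is the standard Hintikka correctness statement: $\alpha_0$ is satisfiable if and only if $\mathcal A_{\alpha_0}$ has an accepting run, i.e.\ a run from a state containing $\alpha_0$ along which every demanded $g\until\phi$ and $g\since\phi$ is discharged via the transition rules of the previous subsection and every demanded eventuality $\fut(\phi\land\now g\equiv r)$ is eventually met --- for finite words, reaching a state with no outstanding obligations; for infinite words, nonemptiness of a generalized B\"uchi condition collecting these eventualities. Soundness is immediate by reading the truth of each closure formula off the run; completeness rests on the fact that the reduction of Theorem~\ref{theo:btlthredmod} replaces every threshold and modulo requirement by conditions on the \emph{global} modular counters $\now g\equiv r\mod q$, whose values a finite-state run can track exactly even though the actual letter counts in a model grow without bound.

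To test for an accepting run I would either construct $\mathcal A_{\alpha_0}$ explicitly --- which takes exponential space --- and then test nonemptiness of the resulting (generalized) B\"uchi automaton in time polynomial in its size, or run the usual nondeterministic search: guess the run state by state, verify each transition, bound the run length by the number of states using an appropriate counter, and maintain the set of eventuality obligations still owed, which --- since those obligations live in the exponential Fischer--Ladner closure of $\alpha_0$ --- may require exponential space; either way the cost is (nondeterministic) exponential space, and by Savitch's theorem this is \Expspace. I expect the real work to lie not in this last bookkeeping but in establishing the correctness of the formula automaton from the previous subsections --- in particular the completeness direction, verifying that a finite-state run over Hintikka sets equipped with the global-counter clauses faithfully simulates an arbitrary model, and confirming that the number of states stays singly exponential despite the binary constants; granting those, the remaining estimates are routine.
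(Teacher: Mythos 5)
Your proposal is correct and follows essentially the same route as the paper: build the exponential-size formula automaton from the Fischer--Ladner closure with global-counter clauses (single-exponentially many states since only one counter value per guard can hold in a state), then guess, write down and verify an accepting path, giving \Expspace, with hardness inherited from \btlth\/ (Corollary~\ref{coro:bthtl}). The paper's proof is just a terser version of this, emphasizing as you do that the run may have to traverse an entire range of global counter values and so must be recorded in exponential space rather than checked on the fly in polynomial space as for \ltl.
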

\begin{proof}
The formula automaton has exponentially many states.
An accepting path may require going through an entire range of
global counter values, and with several such counters operating.
Hence an accepting path has to be guessed, written down and verified. 
This can be done in \Expspace. 
Corollary~\ref{coro:bthtl} showed that the sublogic \btlth\/ 
is already \Expspace-hard.
\end{proof}

\begin{corollary}
The model checking problem for \blintl\/ is \Pspace\/ in the size of 
the model and \Expspace\/ in the size of the formula.
\end{corollary}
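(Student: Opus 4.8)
The plan is to establish both upper bounds by the standard automata-theoretic method, reusing the formula automaton constructed above. The model-checking question --- does every run of a given finite model $M$ (say a transition system over $A$ with $m$ states) satisfy $\phi$? --- is equivalent to asking whether $L(M)\cap L(\lnot\phi)=\emptyset$. So the first step is to build the formula automaton $\mathcal A_{\lnot\phi}$ for the negated formula, exactly as in the preceding subsections. The facts already in hand are that $\mathcal A_{\lnot\phi}$ has at most $2^{p(|\phi|)}$ states for a fixed polynomial $p$ --- the lone exponential coming from the binary encoding of the counting constants in the Fischer-Ladner closure --- and that each state of it is representable, and each transition of it checkable, in space polynomial in $|\phi|$.

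Next I would form the synchronous product $M\times\mathcal A_{\lnot\phi}$, matching the letter read at each step, so that this product accepts exactly $L(M)\cap L(\lnot\phi)$ and an accepting run of it is precisely a run of $M$ witnessing $\lnot\phi$: a path to a final state for finite-word models, or a reachable cycle through an accepting state in the B\"uchi-automaton reading for the $\omega$-word version. The product has $m\cdot 2^{p(|\phi|)}$ states, each storable in $O(\log m + p(|\phi|))$ space, and its transition relation remains checkable in the same space. I would then decide non-emptiness of the product exactly as in the proof of the preceding theorem: if an accepting run exists, one can be chosen whose length is at most the number of states, so it is guessed, written down and verified transition by transition, checking the acceptance and eventuality conditions (each guarded-Until obligation together with its induced global-counter requirement) as one goes.

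Tallying the space: writing down such a run costs (length) $\times$ (bits per state), i.e. $m\cdot 2^{p(|\phi|)}\cdot O(\log m + p(|\phi|))$ cells, and verification stays within the same bound; this is polynomial in $|M|$ and exponential in $|\phi|$, which is exactly the claimed \Pspace\/ in the size of the model and \Expspace\/ in the size of the formula. (A matching \Expspace\/ lower bound in the formula comes for free, since satisfiability of the sublogic \btlth\/ is already \Expspace-hard by Corollary~\ref{coro:bthtl} and reduces to model checking against the model accepting all words; only the upper bound needs the argument above.)

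The main obstacle is not a single hard lemma but the careful bookkeeping of where each blow-up is charged: one must attribute the single exponential entirely to the formula side (binary counting constants, hence an exponential Fischer-Ladner closure, hence an exponential formula automaton), confirm that the model enters the product only as a linear factor, and set up the acceptance condition on the product so that ``accepting run'' faithfully captures membership in $L(M)\cap L(\lnot\phi)$ --- in particular so that every eventuality of the form $\fut(\psi\land \now g\equiv r)$ introduced by the reduction of a guarded Until is actually discharged along the run rather than deferred forever.
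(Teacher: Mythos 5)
Your proposal is correct and follows the paper's overall strategy exactly: build the formula automaton for $\neg\phi$, intersect it with the model, and decide emptiness, charging the single exponential to the binary constants in the formula. The only real divergence is in how the emptiness test is finished. The paper does not write the run down: it observes that a nondeterministic search can be done on the fly, storing one product state at a time, i.e.\ in space logarithmic in the number of states of both automata (which is polynomial in $|\phi|$ plus $\log$ of the model size), and then invokes Savitch's theorem; the exponential in the statement comes only from the fact that the formula automaton is exponentially larger than $\phi$. Your version instead guesses and writes down an entire accepting run of the product, of length bounded by the number of product states, mirroring the paper's own satisfiability argument; this costs space $m\cdot 2^{p(|\phi|)}\cdot O(\log m + p(|\phi|))$, which still lands within the claimed \Pspace\/-in-model and \Expspace\/-in-formula bounds, and it has the advantage of making the discharge of the counter-induced eventualities explicit along the written run (the concern the paper itself raises in the satisfiability proof). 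The paper's on-the-fly variant is more space-frugal, while yours is more elementary and self-contained; your added remark that the \Expspace\/ lower bound transfers from \btlth\/ satisfiability via the universal model is a harmless supplement to the paper's citation of known lower bounds for Counting \ltl.
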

\begin{proof}
Let $\alpha_0$ be a formula and $K$ a Kripke
structure. The above argument shows that for formula $\neg \alpha_0$ 
there is an exponential size formula automaton $M(\neg \alpha_0)$.
Verifying $K \models \alpha_0$ is equivalent to checking whether the
intersection of the languages corresponding to $K$ and $M(\neg \alpha_0)$
is nonempty. This can be done by a nondeterministic algorithm which uses
space logarithmic in the size of both the models. Since
$M(\neg \alpha_0)$ is exponentially larger than $\alpha_0$ we get the
upper bounds in the statement of the theorem, using Savitch's theorem.
The lower bounds are already known for Counting \ltl\/ \cite{LMP}.
\end{proof}

\subsection{Extensions}
\paragraph{Infinite words}
We note that our arguments are not affected
by whether we consider finite or infinite word models.
Hence our results carry over to the usual \ltl\/ setting of infinite words.

\paragraph{Finite group counting constraints}
We gave some details for the group counting constraints and it is easy to see 
that the results also hold when we add group counting constraints to \blintl. 
If we have a purely group counting logic without any threshold constraints, 
we can use the algebraic fact that a finite group 
has a generating set of logarithmic size to obtain a \Pspace\/ complexity 
with the syntax changed to refer to generators. These and other details 
are studied in a PhD thesis \cite{Sree}.

\paragraph{Finitely generated ordered group constraints}
Why did we not pursue our more ambitious logic with constraints over 
a finitely generated and discretely ordered group?

\begin{theorem}[Laroussinie, Meyer and Petonnet \cite{LMP}]
The satisfiability and model checking problems are undecidable for 
the logic with ordered group constraints.
\end{theorem}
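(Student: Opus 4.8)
\noindent The plan is to reduce from the halting problem for deterministic two-counter (Minsky) machines, which is undecidable; this is in essence the route of \cite{LMP}. Fix such a machine $\mathcal M$ with counters $c_1,c_2$ and instruction labels $p_0,\dots,p_n$, where $p_0$ is initial and $p_n$ halting, each instruction being either ``increment $c_i$ and go to $p$'' or ``if $c_i=0$ go to $p'$, else decrement $c_i$ and go to $p$''. I would encode a halting computation of $\mathcal M$ as a finite word $w$ over an alphabet $A$ containing one letter per instruction label together with auxiliary letters $\mathit{inc}_1,\mathit{dec}_1,\mathit{inc}_2,\mathit{dec}_2$ marking the elementary counter events; successive letters of $w$ list the instructions executed, with the auxiliary letters recording the counter events they cause. (If one insists on infinite-word models, pad with a terminal stutter letter; this does not affect anything.)

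\noindent Next I would construct a formula $\Phi_{\mathcal M}$ of the ordered-group logic with $\mathcal L(\Phi_{\mathcal M})\neq\emptyset$ iff $\mathcal M$ halts, as a conjunction of three parts. \emph{Control flow:} the instruction-letter sequence begins at $p_0$, ends at $p_n$, and each step follows the (deterministic) transition function of $\mathcal M$; this part uses only boolean connectives, $\nextt$ and $\henceforth$, and no counting. \emph{Counter faithfulness:} here the ordered-group constraints do the real work. Take $O=\Int$ with its usual (discrete) order, one constraint per counter, and let the constraint for $c_i$ add $+1$ to the group sum for every $\mathit{inc}_i$ and $-1$ for every $\mathit{dec}_i$; then the value of this sum read over a prefix, via $\now$ and after a fixed syntactic correction for the half-open interval semantics of guards, equals the current value of $c_i$. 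Using this we require: at every $\mathit{dec}_i$ position the running value is nonzero, and at every zero-test-on-$c_i$ position it equals $0$ --- the guard $0\leq\sumover{O}(\dots)\leq 0$, and its boolean negation for ``nonzero''. \emph{Termination} is already folded into the control-flow part. (A single constraint over $\Int^2$ under the lexicographic order would handle both counters at once.)

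\noindent Correctness is then routine in both directions: a halting run of $\mathcal M$ yields a model of $\Phi_{\mathcal M}$ by construction; conversely, in any model the control-flow part pins the instruction sequence down to the deterministic transition function, while the counter part forces the marked events to track the Minsky counters exactly, so a zero test succeeds only when the corresponding running sum is genuinely $0$ and no spurious branch is possible --- the model therefore encodes a halting run. Hence satisfiability is undecidable. Undecidability of model checking follows by the usual device: let $K$ be the Kripke structure generating all words over $A$; then $\mathcal M$ fails to halt iff $K\models\lnot\Phi_{\mathcal M}$.

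\noindent The step I expect to be the real obstacle is not the Minsky encoding but arguing that this reduction genuinely escapes the decidable fragments. By Theorem~\ref{theo:btlthredmod} the decidability of \blintl\/ rests on each constraint being trackable by a \emph{finite} global counter --- bounded, or taken modulo $q$ --- whereas the encoding above needs an honest, unbounded, zero-testable $\Int$-valued counter, which is precisely what a discrete order on $O$ provides and what modular or bounded counting cannot replay. Making this rigorous amounts to two checks: first, that the ordered-group guards, evaluated over the half-open intervals of the $\until$/$\since$ semantics and read off with $\now$, really do expose the current counter value with the intended off-by-one bookkeeping; and second, that comparison against fixed elements $l,m$ of a discretely ordered group cannot be mimicked by finitely many residues, so that $\Phi_{\mathcal M}$ does not fall, after all, inside \utlmodinv. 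With these in place the reduction goes through; the full details are in \cite{LMP}.
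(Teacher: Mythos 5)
Your reduction from the halting problem for two-counter Minsky machines, using $\Int$-valued ordered-group sums (read off via $\now$) to track, zero-test and positivity-test the counters, is exactly the argument the paper sketches, so the proposal is correct and takes the same route. One remark: the concern in your final paragraph is unnecessary --- for undecidability you only need the reduction itself to be correct (halting iff satisfiable), not a separate proof that $\Phi_{\mathcal M}$ escapes the decidable fragments such as \utlmodinv; that it must escape them follows automatically once the reduction is sound.
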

\begin{proof}
The presence of the integer constants $c_i \in \Int$ allows easy programming
of the increment and decrement operations of a two-counter machine.
Hence the halting problem for these machines can be reduced to the 
satisfiability problem for the logic with ordered group constraints,
even over $\Int$.
\end{proof}

\paragraph{Branching time}
Our approach extends to \ctl\/ with counting constraints, studied by
Emerson, Mok, Sistla and Srinivasan \cite{EMSS} and 
Laroussinie, Meyer and Petonnet \cite{LMP-cctl}.
The \name{formula tree automaton} constructed uses states as above 
but a transition relation connects a state to several states,
the arity is determined by the number of existential $\until$/$\since$ 
requirements in a state \cite{JW95,VW08}.
We can prove that satisfiability is in \twoexptime.
\cite{LMP-cctl} obtained this upper bound by an exponential 
translation to ordinary \ctl\/ (with satisfiability in \Exptime\/
\cite{Emerson}) and a 
lower bound by describing an Alternating \Expspace\/ Turing machine.

\paragraph{Acknowledgements}
The authors would like to thank Simoni Shah and  A.V.~Sreejith,  
and acknowledge that a portion of the work surveyed here is drawn from 
their Ph.D. theses and associated papers.

\end{document}